\begin{document}

\title{Setup Once, Secure Always: A Single-Setup Secure Federated Learning Aggregation Protocol with Forward and Backward Secrecy for Dynamic Users}
\author{Nazatul Haque Sultan}
\affiliation{%
  \institution{CSIRO's Data61, Australia}
  \country{}
}
\author{Yan Bo}
\affiliation{%
  \institution{CSIRO's Data61, Australia}
  \country{}
}
\author{Yansong Gao}
\affiliation{%
  \institution{UWA, Australia}
  \country{}
}
\author{Seyit Camtepe}
\affiliation{%
  \institution{CSIRO's Data61, Australia}
  \country{}
}
\author{Arash Mahboubi}
\affiliation{%
  \institution{CSU, Australia}
  \country{}
}

\author{Hang Thanh Bui}
\affiliation{%
  \institution{UNSW, Australia}
  \country{}
}
\author{Aufeef Chauhan}
\affiliation{%
  \institution{RMIT, Australia}
  \country{}
}

\author{Hamed Aboutorab}
\affiliation{%
  \institution{UNSW, Australia}
  \country{}
}

\author{Michael Bewong}
\affiliation{%
  \institution{CSU, Australia}
  \country{}
}
\author{Praveen Gauravaram}
\affiliation{%
  \institution{Research \& Innovation, TCS, Australia}
  \country{}
}
\author{Dineshkumar Singh}
\affiliation{%
  \institution{Research \& Innovation, TCS, India}
  \country{}
}
\author{Rafiqul Islam}
\affiliation{%
  \institution{CSU, Australia}
  \country{}
}
\author{Sharif Abuadbba}
\affiliation{%
  \institution{CSIRO's Data61, India}
  \country{}
}

\renewcommand{\shortauthors}{Trovato et al.}

\begin{abstract}
Federated Learning (FL) enables multiple users to collaboratively train a machine learning model without sharing raw data, making it suitable for privacy-sensitive applications. However, local model or weight updates can still leak sensitive information. Secure aggregation protocols mitigate this risk by ensuring that only the aggregated updates are revealed. Among these, single-setup protocols, where key generation and exchange occur only once, are the most efficient due to reduced communication and computation overhead. However, existing single-setup protocols often lack support for dynamic user participation and do not provide strong privacy guarantees such as forward and backward secrecy.
\par 
In this paper, we present a novel secure aggregation protocol that requires only a single setup for the entire FL training. Our protocol supports dynamic user participation, tolerates dropouts, and achieves both forward and backward secrecy. It leverages lightweight symmetric homomorphic encryption with a key negation technique to mask updates efficiently, eliminating the need for user-to-user communication. To defend against model inconsistency attacks, we introduce a low-overhead verification mechanism using message authentication codes (MACs). We provide formal security proofs under both semi-honest and malicious adversarial models and implement a full prototype. Experimental results show that our protocol reduces user-side computation by up to $99\%$ compared to state-of-the-art protocols like e-SeaFL (ACSAC’24), while maintaining competitive model accuracy. These features make our protocol highly practical for real-world FL deployments, especially on resource-constrained devices.

\end{abstract}

\begin{CCSXML}
<ccs2012>
   <concept>
       <concept_id>10002978</concept_id>
       <concept_desc>Security and privacy</concept_desc>
       <concept_significance>500</concept_significance>
       </concept>
   <concept>
       <concept_id>10010147.10010257</concept_id>
       <concept_desc>Computing methodologies~Machine learning</concept_desc>
       <concept_significance>500</concept_significance>
       </concept>
 </ccs2012>
\end{CCSXML}

\ccsdesc[500]{Security and privacy}
\ccsdesc[500]{Computing methodologies~Machine learning}

\keywords{Privacy, Secure Aggregation, Federated Learning, Model Inconsistency, Forward Secrecy, Backward Secrecy, Machine Learning}

\maketitle

\section{Introduction}
\label{intro}
Recent research in Federated Learning (FL) has revealed serious privacy concerns. Studies such as \cite{Zhu2019, Geiping2020, zhou2023ppa, Jeon2024} demonstrate that adversaries, including a potentially malicious central server, can reveal private information of local datasets by analyzing the gradients exchanged during model updates. In response to these threats, secure aggregation has gained significant attention \cite{Chen2024}. This cryptographic technique safeguards user privacy by preventing the server from accessing individual model updates, thereby mitigating the risk of gradient-based reconstruction attacks \cite{Yin2021}.
\par 
Masking-based secure aggregation technique has recently garnered significant attention from the research community due to its efficiency, scalability, and strong privacy guarantees in federated learning settings \cite{Chen2024}. The core idea behind this technique is to conceal users' local model updates using cryptographic masks, which are canceled out during the aggregation phase at the server \cite{Bonawitz2017}. 
This ensures that the server can compute only the aggregated result without ever accessing any individual update. In their seminal work, Bonawitz et al. \cite{Bonawitz2017} introduced the first practical and scalable masking-based secure aggregation protocol, which tolerates user dropouts and minimizes user-side overhead. Building on this foundation, a range of protocols have been developed to enhance security and functionality while reducing communication and computation overhead \cite{Bell2020, Eltaras2023, Fereidooni2021, Guo2021, Liu2023, Xu2020}.
\par 
However, Bonawitz et al.~\cite{Bonawitz2017} protocol requires a setup operation in each training round, during which users establish pairwise cryptographic keys. This setup phase is similarly adopted by many subsequent protocols~\cite{Bell2020, Eltaras2023, Fereidooni2021, Guo2021, Liu2023, Xu2020}. While this design enhances resilience to user dropout and preserves privacy, it introduces additional communication and computation overhead. As noted in~\cite{Ma2023}, these costs become particularly significant in large-scale FL deployments, rendering such approaches less practical for real-world applications. 
\par 
To address the limitations of Bonawitz et al.~\cite{Bonawitz2017} and its variants~\cite{Bell2020, Eltaras2023, Fereidooni2021, Guo2021, Liu2023, Xu2020}, Ma et al.~\cite{Ma2023} (S\&P'23) and Behnia et al.~\cite{Behnia2024} (ACSAC'24) proposed two masking-based secure aggregation protocols, named Flamingo and e-SeaFL, respectively. Both protocols utilize a single setup operation that supports multiple training rounds in FL. This design simplifies the training process by reducing communication and computation costs, thereby accelerating overall convergence. The core idea is to pre-establish pairwise secrets or seed materials between users in Flamingo \cite{Ma2023} and between users and assistant nodes in e-SeaFL \cite{Behnia2024} during the initial setup. These \emph{long-term} seeds or secrets are then used to deterministically derive per-round masking values using lightweight cryptographic primitives such as pseudo-random functions (PRFs), effectively eliminating the need for costly setup phases in each round.
\par 
\par 
However, several challenges remain with Flamingo and e-SeaFL. Most critically, these protocols lack forward and backward secrecy. Although they eliminate the need for repeated setup in each round by using long-term pairwise secrets, this design introduces a vulnerability: if a long-term secret key is compromised, an adversary could reconstruct all past and future masking values for that user within the same training session. This concern is particularly relevant in FL settings, where training can span days and devices may operate in untrusted or adversarial environments. 
Consequently, forward and backward secrecy remain a crucial yet unresolved requirement in these single-setup aggregation protocols. We provide a detailed analysis of forward and backward secrecy for e-SeaFL and Flamingo in Appendix~\ref{appendix:fs-bs-analysis}. Second, these protocols do not support dynamic participation; users must be registered before the initial setup phase and cannot join the training session once it has started. This limitation reduces their practicality in real-world FL deployments, where user availability is often unpredictable. Finally, handling user dropout remains inefficient with Flamingo. It can tolerate up to one-third of corrupted users, including both regular and decryptor users.
 \par 
In this paper, we introduce a lightweight masking protocol for FL based on additive symmetric homomorphic encryption (HE) with key negation. Like Flamingo and e-SeaFL, our protocol requires only a single setup operation for all rest FL training rounds, significantly reducing communication overhead. 
However, our design offers key improvements in both security and efficiency. Unlike Flamingo and e-SeaFL, which rely on PRF operations using long-term secret seeds to generate masking values for each element of the model vector, our protocol uses simple random secrets directly as masking parameters.
This eliminates the need for per-element PRF evaluations, resulting in substantially lower computational cost and making our approach highly efficient and practical--- especially for resource-constrained devices. 
To further enhance scalability and robustness, our protocol introduces a small set of intermediate servers or edge servers (which may be selected users or assistant nodes similar to Flamingo and e-SeaFL, respectively) to assist in partial aggregation. 
These intermediate servers collect masked model updates from users and support the central server, referred to as the Aggregator, in managing user dropouts. 
This intermediate layer provides three key advantages:
\begin{itemize} 
\item First, it eliminates the need for user-to-user interaction unlike the Bonawitz et al.~\cite{Bonawitz2017} and its variants~\cite{Bell2020, Eltaras2023, Fereidooni2021, Guo2021, Liu2023, Xu2020}, which would otherwise impose a significant burden on resource-constrained users~\cite{Liu2023}. 

\item Second, avoiding direct user interactions reduces the trust assumptions required in existing protocols, such as those proposed by Bonawitz et al.~\cite{Bonawitz2017} and its variants~\cite{Bell2020, Eltaras2023, Fereidooni2021, Guo2021, Liu2023, Xu2020}, for maintaining privacy. This design choice also mitigates the risk of user collusion. Our protocol maintains privacy guarantees as long as at least two users behave honestly, a minimal assumption in FL, ensuring that their individual model updates remain protected. 

\item Third, intermediate servers in hierarchical FL architectures offer a practical and scalable solution for handling aggregation and privacy-preserving operations. As demonstrated by Ratnayake et al.~\cite{RATNAYAKE2025128483}, these servers are typically deployed on infrastructure with sufficient computational capacity and reliable connectivity, such as edge or fog nodes, making them well-suited for tasks like model aggregation, homomorphic encryption, and differential privacy noise addition.
\end{itemize}

Notably, the inclusion of intermediate servers in our protocol aligns with the trust model adopted in several recent state-of-the-art works~\cite{Wang2018, kairouz2022, Ma2023}, where helper nodes or designated users assist the aggregator during the aggregation phase. More recently, Wang et al.~\cite{Wang2025} introduced a similar approach involving edge or intermediate servers. Consistent with these models, our protocol remains robust against collusion attacks, provided that at least one intermediate server behaves honestly. Overall, our approach enhances both performance and security while minimizing the trust assumptions placed on users, without increasing the trust requirements for intermediate servers.

\section*{Contributions} Our work makes the following key contributions:
\begin{itemize}
    \item We propose a new secure aggregation protocol for FL that uses lightweight symmetric homomorphic encryption with a key negation technique. It requires only one setup for all training rounds and avoids costly per-element PRF operations, making it more efficient than existing state-of-the-art protocols like Flamingo and e-SeaFL.
    \item Our protocol is the first single-setup secure aggregation protocol to support dynamic user participation, allowing users to join or leave at any round without reinitialization or user-to-user communication--- an important feature missing in existing single-setup protocols like Flamingo and e-SeaFL. It is also the first single-setup secure aggregation protocol to achieve both forward and backward secrecy, ensuring that compromise of long-term keys does not reveal past or future model updates. 
    \item We introduce a lightweight model verification method to detect model inconsistency attacks using Message Authentication Codes (MACs) and hash functions. This approach avoids expensive cryptographic operations and adds an essential layer of protection for secure aggregation protocols in FL \cite{Pasquini2022}.
    \item We provide formal security proofs under both semi-honest and malicious adversarial models, demonstrating robustness against collusion and active attacks. We also present a comprehensive comparison with the most efficient state-of-the-art protocol e-SeaFL, showing that our protocol achieves better performance in terms of computation, communication, security, and functionality. We implement a fully end-to-end prototype and release open-source code with extensive experimental results: \url{https://anonymous.4open.science/r/scatesfl-44E1} (performance testing) and \url{https://anonymous.4open.science/r/scatesfl-815A} (real-world accuracy evaluation).
\end{itemize}

\begin{table*}[!t]
\centering
\tiny
\caption{Comparison of Functionality and Security}\label{table:functionality-comparison}
\begin{tabular}{|l|l|l|l|l|l|l|l|}
\hline
 & Threat Model & Forward \& Backward Secrecy & Communication Round & Setup Round(s) & User Dropout& Model Verification & Dynamic User\\ \hline
SecAgg \cite{Bonawitz2017} & Malicious & yes & 4 & $|i|$ & $\frac{m}{3}$ & no & no\\ \hline
SecAgg+ \cite{Bell2020} & Malicious & yes & 3 & $|i|$ & $\sigma\cdot m$ &no & no\\ \hline
FastSecAgg \cite{Kadhe2020} & Semi Honest & yes & 3 & $|i|$& $\frac{m}{2}- 1$ &no & no\\ \hline
EDRAgg \cite{Liu2023}  & Malicious & yes & 3 & $|i|$ & $\frac{m}{2}- 1$ & no & no\\ \hline
Flamingo \cite{Ma2023}   & Malicious & no & 3 & 1 & $\frac{\delta_d+\eta_d}{3}$& yes & no\\ \hline
e-SeaFL \cite{Behnia2024} & Malicious & no & 2 & 1 & $m-2$& yes & no\\ \hline 
Ours & Malicious & yes & 2 & 1 & $m-2$& yes & yes\\ \hline
\end{tabular}
\\\scriptsize{\vspace{.1cm}$m$ represents the total number of participating users; $\sigma$ represents a security parameter and $\sigma\cdot m\leq \frac{m}{3}$; $|i|$ represents the total number of rounds in the training phase; $\eta_d$ represents the number of corrupted decryptors; $\delta_d$ represents decryptor dropout rate.}
\end{table*}

\begin{table*}[!t]
\centering
\caption{Asymptotic Computation and Communication Overhead Comparison}
\label{table:fun-com-commu-comparison}
\tiny
\begin{tabular}{|l|ccc|ccc|}
\hline
\multirow{2}{*}{Protocols} & \multicolumn{3}{c|}{Computation Overhead} & \multicolumn{3}{c|}{Communication Overhead} \\ \cline{2-7} 
 & \multicolumn{1}{l|}{User} & \multicolumn{1}{l|}{Intermediate Server} & Aggregator & \multicolumn{1}{l|}{User} & \multicolumn{1}{l|}{Intermediate Server} & Aggregator  \\ \hline
   SecAgg \cite{Bonawitz2017} &\multicolumn{1}{c|}{$\mathcal{O}(m^2+ m\cdot |v|)$}&\multicolumn{1}{c|}{NA}&\multicolumn{1}{c|}{$\mathcal{O}(m^2+ m\cdot |v|)$}&\multicolumn{1}{c|}{$\mathcal{O}(m+ |v|)$}&\multicolumn{1}{c|}{NA}&\multicolumn{1}{c|}{$\mathcal{O}(m^2+ m\cdot |v|)$}\\\hline
SecAgg+ \cite{Bell2020} & \multicolumn{1}{c|}{$\mathcal{O}(|v|\cdot\log m+ \log m)$} & \multicolumn{1}{c|}{NA} & $\mathcal{O}(m\cdot |v|+ m\log m)$ & \multicolumn{1}{c|}{$\mathcal{O}(|v|+ \log m)$} & \multicolumn{1}{c|}{NA} & $\mathcal{O}(m\cdot |v|+ m\log m)$ \\ \hline

FastSecAgg \cite{Kadhe2020} & \multicolumn{1}{c|}{$\mathcal{O}(|v|\log m)$} & \multicolumn{1}{c|}{NA} & $\mathcal{O}(|v|\cdot \log m)$ & \multicolumn{1}{c|}{$\mathcal{O}(m+|v|)$} & \multicolumn{1}{c|}{NA} & $\mathcal{O}(m^2+ m\cdot |v|)$ \\ \hline
EDRAgg \cite{Liu2023} & \multicolumn{1}{c|}{$\mathcal{O}(m^2+ |v|)$} & \multicolumn{1}{c|}{NA} & $\mathcal{O}(m+ |v|)$ & \multicolumn{1}{c|}{$\mathcal{O}(m+ |v|)$} & \multicolumn{1}{c|}{NA} & $\mathcal{O}(m^2+ m\cdot |v|)$ \\ \hline

Flamingo$^*$ \cite{Ma2023} & \multicolumn{1}{c|}{$\mathcal{O}(S\cdot |v|+ d\cdot |v|)$} & \multicolumn{1}{c|}{$\mathcal{O}(d^2+ \delta\cdot S\cdot m+ (1- \delta)m+ \epsilon\cdot m^2)$} & $\mathcal{O}(m|v|+ m^2)$ & \multicolumn{1}{c|}{$\mathcal{O}(m+ |v|)$} & \multicolumn{1}{c|}{$\mathcal{O}(d+ \delta\cdot S\cdot m+ (1- \delta)m)$} & $\mathcal{O}(m\cdot |v|+ m\cdot d+ m\cdot S))$ \\ \hline
e-SeaFL$^*$ \cite{Behnia2024} & \multicolumn{1}{c|}{$\mathcal{O}(d\cdot|v|+ |v|)$} & \multicolumn{1}{c|}{$\mathcal{O}(m\cdot |v|+ m)$} & $\mathcal{O}(m\cdot |v|+ d\cdot |v|)$ & \multicolumn{1}{c|}{$\mathcal{O}(d\cdot |v|)$} & \multicolumn{1}{c|}{$\mathcal{O}(m\cdot |v|)$} & $\mathcal{O}(m\cdot |v|+ d\cdot |v|)$\\ \hline
Our protocol & \multicolumn{1}{c|}{{\bf$\mathcal{O}(d\cdot |v|+ |v|)$}} & \multicolumn{1}{c|}{$\mathcal{O}(m\cdot |v|)$} & $\mathcal{O}(m\cdot |v|+ d\cdot |v|)$ & \multicolumn{1}{c|}{$\mathcal{O}(d\cdot |v|)$} & \multicolumn{1}{c|}{$\mathcal{O}(m\cdot |v|)$} & $\mathcal{O}(m\cdot |v|+ d\cdot |v|)$ \\ \hline
\end{tabular}
\\\scriptsize{\vspace{.1cm}$^*$The set of decryptors in Flamingo and assistant nodes in e-SeaFL represented as the intermediate servers for comparison; $m$ represents total number of participating users; $|v|$ represents the size of the input vector; NA: Not applicable; $\sigma$ represents a security parameter and $\sigma\cdot m\leq \frac{m}{3}$; $|i|$ represents the total number of rounds in the training phase; $d$ represents the number of decryptors or intermediate servers; $\delta$ represents user dropout rate; $S$ represents the upper bound on the number of neighbors of a user; $\eta_d$ represents the number of corrupted decryptors; $\delta_d$ represents decryptor dropout rate; $\epsilon$ represents the graph generation parameter in \cite{Ma2023}.}
\end{table*}

\section{Related Work}
\label{sec:related-work}

Secure aggregation in FL protects individual users' local model parameters from being disclosed to the central aggregator server. This mechanism is also referred to as privacy-preserving aggregation \cite{Liu2022}. Differential privacy (DP), homomorphic encryption (HE), secure multi-party computation (SMC), and masking methods are being used for privacy-preserving secure aggregation in FL \cite{Yin2021}. Our proposed protocol falls within the masking-based method. Therefore, we briefly introduce the other secure aggregation methods while discussing masking-based approach.  
\subsection{DP, HE, and SMC Based Protocols}
In DP-based secure aggregation protocols, random noise is added to users' gradients before they are sent to the aggregator server \cite{Wei2020}. This process protects the sensitive information within the gradients by making it difficult to reverse-engineer the original values, as DP introduces controlled noise to the data. Despite this added layer of privacy, the server can still aggregate these perturbed gradients to approximate the true model updates, owing to the mathematical properties of DP. However, a key challenge with DP is the inevitable trade-off between the level of privacy and the utility of the data. The more noise that is introduced to protect privacy, the less accurate the aggregated model becomes, which can potentially lead to significant degradation in model performance \cite{Liu2023}. 
Striking a delicate balance between safeguarding individual privacy and maintaining data utility is a central concern in the application of DP in FL, and finding the optimal point on this spectrum remains an area of active research. Some notable works in this domain include \cite{Wei2020, Wang2020, Zhou2022}.
\par 
In HE-based aggregation protocols, such as those proposed in \cite{Chan2012, Phong2018, Ma2022, Sav2021, Zhang2020Usenix} users first apply computation-intensive algorithms to encrypt their local model parameters before sending them to the central server.
HE allows for arithmetic operations to be performed directly on encrypted data without requiring it to be decrypted first. This means that the server can aggregate the encrypted local model parameters by performing operations like addition or multiplication directly on the ciphertext. After the aggregation process, the server either sends the aggregated encrypted result back to the users for decryption or continues the training process directly on the ciphertext. This approach ensures that sensitive gradient information remains protected throughout the entire process, as the data remains encrypted during both transmission and computation. However, the use of HE introduces significant computational overhead, both in terms of the initial encryption performed by the participants and the subsequent operations carried out by the server on the encrypted data \cite{Liu2023}, \cite{Liu2022}. 

\par 
SMC is another cryptographic technique that has been applied in secure aggregation within FL \cite{So2021, Kadhe2020, So2022ML}. SMC allows multiple participants, each with their private data, to collaboratively compute a desired objective function without revealing their data to others. This method ensures that each participant’s data remains confidential while still enabling the computation of an accurate result \cite{Yin2021}. However, SMC-based protocols tend to be inefficient, as they often involve significant communication overhead and face challenges in managing user dropouts \cite{Liu2022}, \cite{Ma2023}.

\subsection{Masking-Based Protocols}
The core idea of masking-based protocols is to secure users' local model parameters by adding random values (often referred to as one-time pads) to them before they are sent to the aggregator server \cite{Bonawitz2017}. These random values are designed so that they cancel out during the aggregation process, allowing the aggregator server to recover the aggregated plaintext gradient values for all participants in that round. The primary goal of the masking-based approach is to protect individual users' local model parameters from being exposed to unauthorized parties while still enabling the aggregator server to access the aggregated gradient in its plaintext form \cite{Liu2022}. Our work aligns with this approach.
\par 
In \cite{Bonawitz2017}, Bonawitz et al. proposed a practical secure aggregation protocol that uses masking to conceal individual users' local model parameters from the aggregator server. The protocol also employs secret sharing techniques to accommodate user dropouts, ensuring that the learning process remains unaffected. Building on the core concepts introduced by Bonawitz et al., several subsequent works have been proposed over the years to enhance security and functionality, including \cite{Xu2020, Guo2021, Fereidooni2021, Liu2023, Bell2020, Eltaras2023}. In \cite{Xu2020} and \cite{Guo2021}, the authors proposed two protocols to add verifiability on top of the protocol \cite{Bonawitz2017}. In \cite{Fereidooni2021}, Fereidooni et al. used secret sharing and homomorphic encryption to achieve secure aggregation without relying on a trusted third party to generate any public/private key pair for the clients. In \cite{Liu2023}, Liu et al. applied the concepts of homomorphic pseudo-random generator and Shamir Secret Sharing technique in achieving user dropouts and reducing the communication costs among the users, which is secure against both semi-honest and malicious adversaries. In \cite{Bell2020}, the authors reduced the communication overhead of \cite{Bonawitz2017} by utilizing a logarithmic degree k-regular graph. In \cite{Liu2024}, Liu et al. proposed a Dynamic User Clustering protocol that builds on existing masking-based secure aggregation protocols, such as \cite{Bonawitz2017}, and incorporates a sparsification technique to address the interoperability issues with sparsification. In \cite{Fu2024}, Fu et al. proposed a blockchain-based decentralized secure aggregation protocol to replace the central aggregator server. The protocol uses masking and Shamir's Secret Sharing to ensure privacy. In \cite{Eltaras2023}, Eltaras et al. proposed a pairwise masking-based secure aggregation protocol that uses auxiliary nodes to achieve verifiability and handle user dropouts. 
In \cite{Yang2023}, Yang et al. proposed a single mask secure aggregation protocol for FL by combining the concepts of homomorphic Pseudorandom Generator, homomorphic Shamir secret sharing, and Paillier encryption. However, all these protocols (\cite{Bonawitz2017, Xu2020, Guo2021, Fereidooni2021, Liu2023, Bell2020, Yang2023, Eltaras2023, Liu2024, Fu2024}) work well for one round of training, they become highly inefficient when multiple training rounds are required (which is essential for most real-world FL applications) due to the need for an expensive setup phase that involves four communication rounds to establish shared randomness and pairwise keys in every FL training round \cite{Ma2023}. 
\par 
In \cite{Wang2023}, Wang et al. proposed a single-mask secure aggregation protocol based on the Decisional Composite Residuosity (DCR) assumption, incorporating non-interactive zero-knowledge (NIZK) proofs to enable result verification. However, as noted by Wu et al. in \cite{Wu2024}, the protocol is vulnerable to integrity attacks, where local gradients or aggregation results, along with their corresponding authentication tags or proofs, can be tampered without being detected by the verifiers. In \cite{Fazli2023}, Khojir et al. introduced a secure aggregation protocol based on additive secret sharing. The protocol utilizes a three-layered architecture (i.e., clients, middle servers, and lead server), which reduces communication costs for users compared to other protocols that rely on masking processes using secret sharing, which requires distributing secret shares among users, such as those in \cite{Xu2020, Guo2021, Fereidooni2021, Liu2023, Bell2020, Eltaras2023}. However, it is unable to handle user dropouts, which is essential for real-world applications. In \cite{Guo2024}, Guo \emph{et al.} proposed MicroSecAgg, a secure aggregation protocol designed to improve efficiency in single-server federated learning. However, Zhang \emph{et al.} \cite{Zhang2024b} later identified a privacy vulnerability in MicroSecAgg, demonstrating that predictable masking values could be exploited to compromise user privacy.
\par 
Recently, in \cite{Ma2023} and \cite{Behnia2024}, Ma et al. and Behnia et al. proposed ``Flamingo” and ``e-SeaFL”, respectively-- multi-round, single-server secure aggregation protocols that do not require an initialization setup phase for each training round (a single setup phase suffices). These protocols significantly reduce computation and communication overhead compared to repeated setup-based approaches such as \cite{Bonawitz2017, Xu2020, Guo2021, Fereidooni2021, Liu2023, Bell2020, Liu2024, Fu2024}. However, as discussed in our security analysis in Appendix \ref{appendix:fs-bs-analysis}, both Flamingo and e-SeaFL lack support for both forward and backward secrecy, which are important for single-setup secure aggregation protocols. Furthermore, they are not designed to operate in dynamic environments where new users may join or drop the training session at different times. Both protocols require prior knowledge of all participating users before the initial setup phase, due to the need for pairwise key generation. 
\par 
Table~\ref{table:functionality-comparison} and Table~\ref{table:fun-com-commu-comparison} present a comparative analysis of our protocol against prominent masking-based secure aggregation protocols. Earlier protocols such as SecAgg~\cite{Bonawitz2017}, SecAgg+~\cite{Bell2020}, and FastSecAgg~\cite{Kadhe2020} support forward and backward secrecy but require multiple setup rounds and lack support for dynamic user participation and model verification\footnote{Model verification refers to the process of ensuring the integrity, authenticity, and correctness of the aggregated global model.}. More recent protocols, including Flamingo~\cite{Ma2023} and e-SeaFL~\cite{Behnia2024}, reduce setup complexity and introduce model verification capabilities, yet they do not offer forward/backward secrecy or support for dynamic users. In contrast, our protocol is the only single-setup protocol that simultaneously provides forward and backward secrecy, dynamic user participation, and model verification, while maintaining low communication and computational overhead as shown in Table \ref{table:fun-com-commu-comparison}\footnote{In terms of asymptotic computation and communication overhead, our protocol is comparable to e-SeaFL~\cite{Behnia2024}. However, in practice, it achieves better computational efficiency, as e-SeaFL relies on relatively expensive elliptic curve operations and requires more frequent PRF evaluations. In contrast, our protocol leverages lightweight modular arithmetic. Detailed empirical performance results are provided in Section~\ref{sec:performance_eval}.}.

\section{Forward and Backward Secrecy in Single-Setup Secure Aggregation}
\label{sec:forward-backward-secrecy-summary}
Forward secrecy (FS) and backward secrecy (BS) are essential privacy guarantees in secure aggregation based FL systems, especially in long-running deployments. While single-setup secure aggregation protocols like e-SeaFL~\cite{Behnia2024} and Flamingo~\cite{Ma2023} offer efficiency benefits, they lack FS and BS due to their reliance on static long-term secrets. This limitation poses a significant privacy risk: if a long-term secret is compromised, an adversary can reconstruct all past and future model updates in that training session. We provide a detailed case study of this vulnerability in Appendix~\ref{appendix:fs-bs-analysis}, including the mask derivation process and practical implications.
 
\section{Preliminaries}
\label{sec:preli}
In this section, we briefly introduce the core concepts of symmetric homomorphic encryption and digital signatures, which are integral to the design of our protocol. 

\subsection{Symmetric Homomorphic Encryption}
\label{sec:symmetricKeyHomoEncrypt}
Our protocol employs a key negation method to mask the user's local model parameters. This method is based on the symmetric homomorphic encryption mechanism proposed in \cite{Castelluccia2009}. In this section, we will briefly describe the work of \cite{Castelluccia2009}. The key negation method will be explained in Section \ref{sec:tech_intuition}.
\par 
Let $m_i \in \mathbb{Z}_q$ represent a secret message, where $q$ is a large public prime. The message $m$ can be encrypted as follows:
\begin{align}
c_i=& \mathsf{Enc_{k_i}}(m_i)= m_i+ \mathtt{k_i} \mod{q}.
\end{align} 
where $k_i\in \mathbb{Z}_q$.
A receiver of $c_i$ with the given secret key $\mathtt{k_i}$ can recover the secret message $m_i$ as follows:
\begin{align}
m_i=& \mathsf{Dec_{k_i}}(c_i)= c_i- \mathtt{k_i}\mod{q}.
\end{align}
\begin{align}\label{eq:add_cipher}
    \mathsf{Dec_{(k_i+ k_j)}}(c_i+ c_j)=& \mathsf{Dec_{k_i}}(c_i)+ \mathsf{Dec_{k_j}}(c_j)\notag\\
    =& c_i- \mathtt{k_i}+ c_j- \mathtt{k_j}\mod{q}\notag\\
    =& m_i +m_j +[(\cancel{\mathtt{k_i}+ \mathtt{k_j}}) -(\cancel{\mathtt{k_i}+ \mathtt{k_j}})]\mod{q}\notag\\
    =& m_i+ m_j.
\end{align}
The protocol described above has additive homomorphic properties \cite{Castelluccia2009}. The addition of two ciphertexts is illustrated in Equation \ref{eq:add_cipher}. This property can also be extended to aggregate a set of ciphertexts using the corresponding aggregated secret keys, as shown in Equation \ref{eq:agg_cipher}.
\begin{align}\label{eq:agg_cipher}
    \sum_{1\leq i\leq n}m_i= & \mathsf{Dec}_{\sum_{1\leq i\leq n}k_i}\left(\sum_{1\leq i\leq n}c_i\right)\\
    =& \sum_{1\leq i\leq n}c_i - \sum_{1\leq i\leq n}k_i \mod{q}.
\end{align}
The described symmetric homomorphic encryption scheme is semantically secure (IND-CPA) if the secret keys $\mathtt{k_i}$, where $i\in\{1, 2, \cdots, n\}$, are generated randomly and no keys are reused. Please refer to \cite{Castelluccia2009} for detailed security proof.

\subsection{Signature Scheme}
Our protocol employs digital signatures to ensure the integrity and authenticity of messages exchanged between entities, leveraging a UF-CMA secure scheme as described in \cite{Bonawitz2017}. In general, a digital signature scheme consists of the following probabilistic polynomial-time (PPT) algorithms:
\begin{itemize}
    \item $(\mathsf{priv_{u_i}, pub_{u_i}}) \leftarrow \mathsf{SIG.Gen}(\lambda)$: This algorithm takes a security parameter $\lambda$ as input and outputs a pair of private and public keys $(\mathsf{priv_{u_i}, pub_{u_i}})$ for an entity, denoted as $\mathsf{u_i}$.
    \item $\sigma \leftarrow \mathsf{SIG.Sign} (\mathsf{priv_{u_i}, m})$: This algorithm takes as input the private key $\mathsf{priv_{u_i}}$ and a message $\mathsf{m}$, and outputs a signature for the message $\mathsf{m}$.
    \item $\{0, 1\} \leftarrow \mathsf{SIG.Ver} (\mathsf{pub_{u_i}}, \mathsf{m}, \sigma)$: This algorithm takes the public key $\mathsf{pub_{u_i}}$, a message $\mathsf{m}$, and the signature $\sigma$ as input, and outputs a bit indicating whether the signature is valid or invalid.
\end{itemize}

\begin{figure}[!t]
    \centering
	\fbox{\scalebox{5}{\includegraphics[width=1cm, height=.7cm]{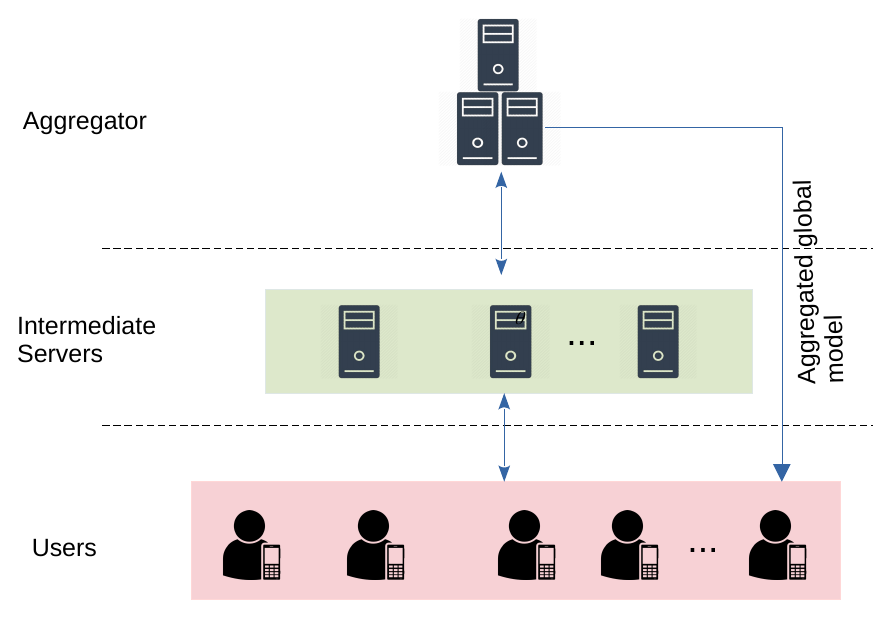}}}
	\caption{System Architecture}		
    \label{fig:system-architecture}
 \end{figure} 
 
\section{System Architecture, Threat Model, and Assumptions of Our Proposed Protocol}
\label{sec:architecture-threat-model}
In this section, we present the system architecture, threat model, and security assumptions of our proposed protocol. We first start with the system architecture.

\subsection{System Architecture}
\label{sec:system-architecture}
Figure \ref{fig:system-architecture} illustrates the system architecture of our protocol, which comprises three primary entities: users, intermediate servers, and the central server which we termed as Aggregator. 
\paragraph*{Users} They are the edge devices, such as smartphones and IoT devices, for \emph{cross-device FL}\footnote{Cross-device FL refers to a distributed machine learning method where many devices, such as smartphones or IoT devices, train a shared global model without sharing their local data.}, or organizations like hospitals, banks, universities, and government agencies for \emph{cross-silo FL}\footnote{Cross-silo FL involves a small number of organizations, such as hospitals or businesses, collaboratively training a model without sharing their private data. Unlike cross-device FL, cross-silo FL usually deals with higher-quality data from trusted institutions and is designed for scenarios with stricter privacy and security requirements.}. These users generate and own data locally. They perform local model training on their own data, downloading the current global model from the aggregator, training it, and then sending the updated model parameters to the intermediate servers and aggregator. Before transmission, the parameters are masked, ensuring that only the aggregator can recover the final aggregated model parameter, thereby maintaining the confidentiality of individual users' local model data. 
\paragraph*{Intermediate servers} They act as aggregation points between users and the main aggregator. They are to reduce communication overhead by combining model updates from multiple users before sending them to the aggregator. These servers receive model updates from users, aggregate the updates, and then forward the aggregated results to the aggregator for final processing. Additionally, they can help users detect model inconsistency attacks originating from the aggregator. In practical applications, fog nodes, edge servers, and third-party cloud services can function as intermediate servers, especially in edge computing or distributed system scenarios. By facilitating localized processing, these components reduce latency and bandwidth usage.
\paragraph*{Aggregator} It is the central server that handles the entire FL process. It maintains the global model, coordinates the FL training rounds, and aggregates model updates directly from the intermediate servers and users. The aggregator initiates the training process by distributing the initial global model to users. After receiving aggregated updates from the intermediate servers, the aggregator combines these updates to improve the global model. It then sends the updated global model back to users for the next round of training. 

\subsection{Threat Model}
\label{sec:threat-model}
Our threat model considers two types of adversaries: semi-honest and malicious. 
\par In the semi-honest model, we assume the aggregator is honest in executing assigned tasks but attempts to learn individual users' local training models to infer their datasets. Similarly, the intermediate servers are also considered semi-honest. Our primary goal in this model is to protect the confidentiality and privacy of each user's local training model. 

\par In the malicious model, we assume a more challenging scenario where both the aggregator and the intermediate servers may act maliciously, attempting to compromise the individual users' local training models. In this scenario, we assume that the malicious aggregator and the malicious intermediate servers can collude with up to $m-2$ users, where $m$ is the total number of participating users in an FL round. Similarly, we assume that the malicious aggregator can collude with up to $n'-1$ malicious intermediate servers in an FL round, where $n'$ is the total number of participating intermediate servers. Additionally, our threat model considers model inconsistency attacks by the malicious aggregator, where the malicious aggregator provides different parameters to different users to exploit behavioral differences in the model updates, thereby inferring information on users' datasets \cite{Pasquini2022}. We aim to detect such behavior by the malicious aggregator.
\par 
We also consider the possibility of malicious users in our threat model. These users can collude with other users as well as with the aggregator and the intermediate servers to gain knowledge of other users' local training models. However, we assume that up to $m-2$ users can collude, meaning at least two users must remain honest at all times. Without this assumption, the users could collude with the aggregator to obtain the local training model update of a targeted user.

\subsection{Assumption}
\label{sec:assumption}
We make some assumptions in designing our protocol. First, we assume that the aggregator and intermediate servers remain online throughout the training process. Second, each entity possesses a public-private key pair managed via a Public Key Infrastructure (PKI). Third, all users are uniquely identifiable. Fourth, authenticated and private communication channels exist between users and intermediate servers, users and the aggregator, and between intermediate servers and the aggregator.

\begin{figure}[h]
    \centering
	\fbox{\scalebox{5}{\includegraphics[width=1.2cm, height=.4cm]{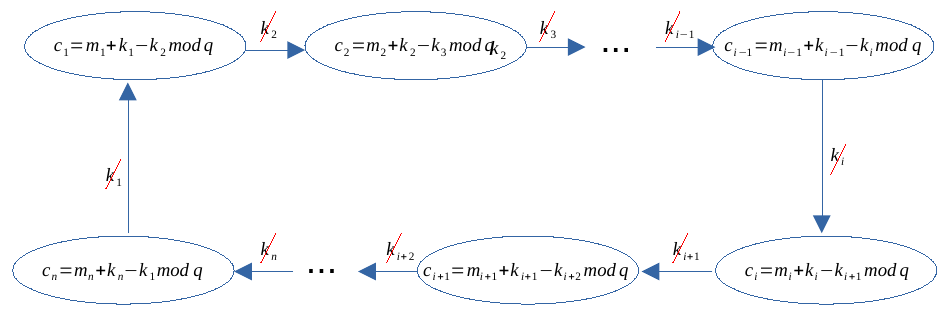}}}
	\caption{Sample Key Negation Mechanism}		
    \label{fig:key-neg}
 \end{figure}

\begin{figure*}[htbp]
  \centering
\begin{scriptsize}
  \begin{tikzpicture}[
    protocolbox/.style={draw, rectangle, rounded corners, inner sep=10pt},
    title/.style={font=\bfseries}
  ]
    \node[protocolbox] (protocol) {
      \begin{minipage}{.9\textwidth}
        \centering \textbf{Our Proposed Secure Aggregation Protocol}
        \begin{itemize}
            \item \textbf{Setup}
                \begin{itemize}
                \item All the entities agree on the security parameter $\lambda$, a hash function $H: \{0, 1\}^*\rightarrow \{0, 1\}^l$, a set $\mathbb{Z}_q$ of integer modulo $q$, and a threshold value $t$.  
                \item Each user, intermediate server, and aggregator are assigned unique identifiers (denoted as $u_i$ for the $i^{th}$ user, $f_i$ for the $i^{th}$ intermediate server, and $\mathrm{Agg}$ for the aggregator). 
                \item \textcolor{red}{Each user, intermediate server, and aggregator are assigned a public and private key pair $(\mathsf{pub_{u_i}}, \mathsf{priv_{u_i}}), (\mathsf{pub_{f_i}}, \mathsf{priv_{f_i}})$, and $(\mathsf{pub_{Agg}}, \mathsf{priv_{Agg}})$, respectively, for signature generation.}
                \item All users know the participating intermediate servers and the aggregator set $\mathcal{N}$, where $|\mathcal{N}|= n$.
                \item All intermediate servers and the aggregator have the universal set of the users $\mathcal{M}$, where a random subset $\mathcal{U}$ ($m= |\mathcal{U}|$) of users participate in each round of training.
               \item All intermediate servers communicate with the aggregator over private, authenticated channels.
                \item All users have private, authenticated channels with both the intermediate servers and the aggregator.
               \end{itemize}
            \item \textbf{Masking Round}
               \begin{itemize}            
                \item Users:
                \begin{itemize}
                    \item Each user $u_i$ chooses random masks (i.e., secret vectors) $\{k_j\}_{\forall j\in \mathcal{N}}\in \mathbb{Z}_q$.
                    \item Each user $u_i$ generates a directed cycle for the participating entities in $\mathcal{N}$, as described in Section \ref{sec:tech_intuition}.
                    \item For each $j^{th}$ node in the directed cycle, each user $u_i$ masks the trained model $x_t$ as follows: $c^{i, j}_t= \frac{x_t}{n}+ k_j- k_{j-1}\mod{q}$ and sends $<c^{i, j}_t, \textcolor{red}{\sigma^{i, j}_t}>$ to the $j^{th}$ node, where \textcolor{red}{$\sigma^{i, j}_t= \mathsf{SIG.Sign}(\mathsf{priv_{u_i}}, c^{i, j}_t)$}. 
                \end{itemize}
             \end{itemize}
            \begin{itemize}
                \item Intermediate Servers:
                \begin{itemize}
                     \item Each intermediate server, say $f_j$ chooses an empty list $\mathrm{F_j}$.
                    \item Each intermediate server $f_j$ receives the tuple $<c^{i, j}_t, \textcolor{red}{\sigma^{i, j}_t}>$ from each participating user $u_i$. 
                    \item \textcolor{red}{The intermediate server computes $\mathsf{SIG.Ver}(\mathsf{pub_{u_i}}, c^{i, j}_t, \sigma^{i, j}_t)$. If the signature is valid}, it adds the user identity $u_i$ to the participating user list $\mathrm{F_j}$. 
                    \item Each intermediate server $f_j$ sends the tuple $<\mathrm{F_j}, \textcolor{red}{\sigma_{\mathrm{F_i}}}>$ to the aggregator if and only if $|\mathrm{F_j}|\geq t$, \textcolor{red}{where $\sigma_{\mathrm{F_j}}= \mathsf{SIG.Sign}(\mathsf{priv_{f_j}}, \mathrm{F_j})$}. Otherwise, it aborts.
                \end{itemize}
            \end{itemize}
            \begin{itemize}
                \item Aggregator $\mathsf{Agg}$:
                \begin{itemize}
                    \item The aggregator chooses an empty list $\mathrm{A}$.
                   \item The aggregator receives the tuple $<c^{i, \mathsf{Agg}}_t, \textcolor{red}{\sigma^{i, \mathsf{Agg}}_t}>$ from each user $u_i$. 
                   \item \textcolor{red}{It performs $\mathsf{SIG.Ver}(\mathsf{pub_{u_i}}, c^{i, \mathsf{Agg}}_t, \sigma^{i, \mathsf{Agg}}_t)$. If the signature is valid}, it adds the user identity $u_i$ to the participating user list $\mathrm{A}$.
                   \item If $|\mathrm{A}|\geq t$ and the aggregator receives all tuples $<\{\mathrm{F_j}, \textcolor{red}{\sigma_{\mathrm{F_j}}}\}_{\forall j\in (\mathcal{N}\setminus \mathrm{Agg})}>$ from all the intermediate servers, \textcolor{red}{it verifies the signatures $\{\mathsf{SIG.Ver}(\mathsf{pub_{f_j}}, \mathrm{F_j}, \sigma_{\mathrm{F_j}})\}_{\forall j \in (\mathcal{N}\setminus \mathrm{Agg})}$}; otherwise it aborts. 
                   \item The aggregator computes the common active user list $\mathrm{I}= \mathrm{A}\cap \{\mathrm{F_j}\}_{\forall j \in (\mathcal{N}\setminus \mathrm{Agg})}$.
                   \item If $\mathrm{|I|}\geq t$, the aggregator sends back the tuple $<\mathrm{I}, \textcolor{red}{\sigma_{\mathsf{Agg}}}>$ to each intermediate server $f_j$, \textcolor{red}{where $\sigma_{\mathsf{Agg}}= \mathsf{SIG.Sign(\mathsf{priv_{Agg}}, I)}$}.
                \end{itemize}
            \end{itemize}
            \item \textbf{Partial Aggregation by the Intermediate Servers}
                \begin{itemize}
                   \item Each participating intermediate server, say $f_j$ receives the tuple $<\mathrm{I}, \textcolor{red}{\sigma_{\mathsf{Agg}}}>$ from the aggregator.
                   \item If $|\mathrm{I}|\geq t$, \textcolor{red}{the intermediate server $f_j$ checks $\mathsf{SIG.Ver}(\mathsf{pub_{Agg}}, \mathrm{I}, \sigma_{\mathsf{Agg}})$. If the signature is valid}, the intermediate server ${f_j}$ computes the partially aggregated masked model $\mathsf{PartialAgg_{f_j}}$ using the masked model received from the users in $\mathrm{I}$.
                   $$\mathsf{PartialAgg_{f_j}}= \sum_{\forall i\in \mathrm{I}}c^{i, j}_t$$ 
                   \item The intermediate server $f_j$ sends the tuple $<\mathsf{PartialAgg_{f_j}}, \textcolor{red}{\sigma_{\mathsf{PartialAgg_{f_j}}}}>$ to the aggregator, where \textcolor{red}{$\sigma_{\mathsf{PartialAgg_{f_j}}}= \mathsf{SIG.Sign}(\mathsf{priv_{f_j}}, \mathsf{PartialAgg_{f_j}})$}.
                \end{itemize}
            \item \textbf{Final Aggregation and Unmasking by the Aggregator} 
                \begin{itemize}
                   \item Once the aggregator receives all the tuples $\{\mathsf{PartialAgg_{f_j}}, \textcolor{red}{\sigma_{\mathsf{PartialAgg_{f_j}}}}\}_{\forall f_i \in (\mathcal{N}\setminus \mathrm{Agg})}$, \textcolor{red}{it checks $\{\mathsf{SIG.Ver}(\mathsf{pub_{f_j}}, \mathsf{PartialAgg_{f_j}}, \sigma_{\mathsf{PartialAgg_{f_j}}})\}_{\forall j \in (\mathcal{N}\setminus \mathrm{Agg})}$. If the signatures are valid}, the aggregator performs the final aggregation as follows:
                   $$\theta_t=\sum_{\forall i\in \mathrm{I}}c^{i, \mathrm{Agg}}_t+ \sum_{\forall j\in (\mathcal{N}\setminus \mathrm{Agg})}\mathsf{PartialAgg_{f_j}}$$
                   \item Aggregator sends the global model parameter $\theta_t$ and the common active user list $\mathrm{I}$ to each participating user for the next round of training.
                \end{itemize}
            \item \textbf{Global Model Parameter Verification}
                \begin{itemize}
                \item Aggregator:
                \begin{itemize}
                  \item Aggregator chooses a random vector $s_t\in \mathbb{Z}_q$. 
                  \item Aggregator computes $R= H(\theta_t)+ s_t$ and a message authentication code $S=\mathtt{MAC}_{s_t}(\theta_t)$
                  \item Aggregator sends the tuple $V= <\mathrm{T}= <R, S>, \mathrm{A}, \textcolor{red}{\mathsf{SIG.Sign(\mathsf{priv_{Agg}}, <T, \mathrm{I, A}>)}}>$ to each intermediate server.
                \end{itemize}
            \item Intermediate Servers:
                \begin{itemize}
                    \item \textcolor{red}{Each intermediate server, say $f_j$ verifies $\mathsf{SIG.Sign(\mathsf{priv_{Agg}}, <\mathrm{T}, \mathrm{I, A}>)}$ once it received the tuple $<\mathrm{T}= <R, S>, \mathrm{A}, \mathsf{SIG.Sign(\mathsf{priv_{Agg}}, <T, \mathrm{I, A}>)}>$ from the aggregator.}
                    \item \textcolor{red}{If the verification is successful,} the intermediate server $f_j$ forwards the tuple $<\mathrm{T}$, $\mathrm{I, A}>$ to the users in $\mathrm{I}$ along with the user list $\mathrm{F_j}$.
                \end{itemize}
                \item User:
                \begin{itemize}
                     \item Each user $u_i$ receives the tuple $<\mathrm{T}=<R, S>, \mathrm{I, A}, \textcolor{red}{\mathsf{SIG.Sign(\mathsf{priv_{Agg}}, <T, \mathrm{I, A}>)}}>$ and the user list $\{\mathrm{F_j}\}_{\forall j\in (\mathcal{N}\setminus \mathrm{Agg})}$ from the intermediate servers.
                    \item \textcolor{red}{The user first verifies $\mathsf{SIG.Sign(\mathsf{priv_{Agg}}, <T, \mathrm{I, A}>)}$. If verification is successful,} it verifies if $\mathrm{I}= \mathrm{A}\cap \{\mathrm{F_j}\}_{\forall j \in (\mathcal{N}\setminus \mathrm{Agg})}$ and $|\mathrm{I}|\geq t$. If verification fails, the user stops from participating in future rounds. Otherwise, the user goes to the following steps.
                    \item The user $u_i$ recovers the secret key $s_t= R- H(\theta_t)$ (where the user already has $\theta$ from the aggregator). 
                    \item If $S= \mathtt{MAC_{s_i}}(\theta_t)$, the user $u_i$ compares the remaining $S$ values received from the other intermediate servers to verify their consistency. If the verification fails or any mismatches are found, the user $u_i$ detects a model inconsistency attack and ceases participation.
                \end{itemize}
             \end{itemize}
        \end{itemize}
       
      \end{minipage}
    };  
  \end{tikzpicture}
  \caption{Detailed Description of Our Proposed Secure Aggregation Protocol}
  \label{fig:protocol-box}
  \end{scriptsize}
\end{figure*}

\section{Our Proposed Protocol}
\label{sec:our_scheme}
In this section, we provide a detailed explanation of our proposed protocol. We first present the technical intuition behind the protocol, followed by a description of its main construction. 

\subsection{Technical Intuition}
\label{sec:tech_intuition}
Our protocol has three main entities: users, intermediate servers, and the aggregator, as described in Section \ref{sec:system-architecture}. Users mask their locally trained models, intermediate servers partially aggregate these masked models, and the aggregator completes the final aggregation, eventually unmasking the global model. In this section, we explain how masking and unmasking work, along with how our protocol manages user dropouts and detects model inconsistency. Let $x_u$ be the private vectors of a user $u$, and $\mathcal{U}$ be the set of participating users in a round. 
\par 
The main goal of our protocol is to compute $\sum_{\forall u \in \mathcal{U}} x_u$ in such a way that the aggregator cannot see the individual private vectors $x_u$ of any user. Our protocol can tolerate up to $m-2$ compromised users, where $m = |\mathcal{U}|$. This means that the aggregator will only get at most the aggregated vectors of the two honest users, without learning their individual private vectors. Moreover, our protocol ensures that no user learns any useful information about another user's private vector.

\paragraph*{\textbf{Masking}} 
In our system, there are $m$ participating users, $n$ intermediate servers along with the aggregator. We represent the set of intermediate servers and the aggregator as $\mathcal{N}$. Each user $u$ generates a random vector $k_i$ in $\mathbb{Z}_q$ for each $i\in \mathcal{N}$. The user $u$ masks its model $x_u$ using the formula:
\begin{align}
    c^i=& x_u + k_i \mod{q}.
\end{align}
 This masked model $c^i$ is then sent to the $i^{th}$ entity in $\mathcal{N}$. All other users follow the same process. After receiving all masked models from the participating users, the intermediate servers partially aggregate the masked models and send the results to the aggregator. The aggregator then combines these partially aggregated masked models with its own to produce the final aggregated masked model. 
 \par 
 We can observe that these masked models do not reveal any information about the user $u$'s private vector $x_u$ to the intermediate servers, the aggregator, or other users since the random vectors $\{k_i\}_{\forall i\in \mathcal{N}}$ are private to the user $u$ and function as one-time pads. 

 \paragraph*{\textbf{Unmasking using Key Negation Technique}} The masking process described earlier does not involve an unmasking step, meaning the aggregator only receives the combined masked models. To introduce the unmasking step, our protocol uses a key negation approach, which is based on the aggregated homomorphic encryption method outlined in Equation \ref{eq:agg_cipher} (see Section \ref{sec:symmetricKeyHomoEncrypt}). The main goal is to use two random secret vectors to mask the models for each element in $\mathcal{N}$ so that these vectors cancel each other out during aggregation.
\par 
The concept behind our key negation mechanism is shown in Figure \ref{fig:key-neg}. Each node, labeled $c_i$, represents a masked model for the $i^{th}$ entity in $\mathcal{N}$, where $c_i = \mathsf{Enc_{(k_i- k_{i+1})}}(x_u) = x_u + \mathtt{k_i} - \mathtt{k_{i+1}} \mod{q}$. A directed arrow from node $c_i$ to node $c_{i+1}$ signifies the negation of key $k_{i+1}$ during the aggregation of masked models $c_i$ and $c_{i+1}$, as described in Equation \ref{eq:two-agg}.


\begin{flalign}
    \label{eq:two-agg}
    c_i+ c_{i+1}=& [x_u+ \mathtt{k_i}- \mathtt{k_{i+1}}] + [ x_u+ \mathtt{k_{i+1}}- \notag\\&\mathtt{k_{i+2}}]\mod{q}\notag\\
    =& [ x_u+ x_{u+1}+ \mathtt{k_i}- \cancel{\mathtt{k_{i+1}}}+ \cancel{\mathtt{k_{i+1}}}- \notag\\&\mathtt{k_{i+2}}] \mod{q}.
\end{flalign}
If a directed graph with a cycle, similar to the one shown in Figure \ref{fig:key-neg}, is generated from the entities in $\mathcal{N}$, aggregating all the masked models associated with the nodes will negate all the secret vectors, yielding the aggregated model. This process is illustrated in Equation \ref{eq:final-agg}.  
\begin{align}\label{eq:final-agg}
    \frac{\left(\sum_{1\leq i\leq n}c_i\right)}{n}=& n\cdot x_1+ n\cdot x_2+ \cdots+ n\cdot x_u+ (\cancel{k_1}-\cancel{k_2})+ (\cancel{k_2}-\cancel{k_1}+ \notag\\
    & \cdots+ (\cancel{k_{n-1}}- \cancel{k_n})+ (\cancel{k_n}-\cancel{k_1}))
    = \sum_{1\leq u\leq n}x_u.
\end{align}

\paragraph*{\textbf{User Dropouts}} In the BBGLR protocol and its variants, users agree on shared secret values with each other. This ensures that their random masks cancel out during the aggregation process. If a user drops out, the aggregator must still be able to remove that user's masked contribution. To do this, the aggregator collects the secret shares from the remaining users to cancel out the dropped user’s contribution in the final aggregation. A similar method is also used in the Flamingo protocol \cite{Ma2023}, where the aggregator must contact a set of \emph{decryptors} to handle the dropped users' contributions. 
\par 
In contrast, our protocol does not require users to agree on shared secrets or interact with each other. Each user’s masked model is independent of the others, unlike the BBGLR protocol. Our dropout mechanism is straightforward and requires only an additional communication round between the intermediate servers and the aggregator. After the masking phase, each user sends their masked models to both the intermediate servers and the aggregator. Consequently, each intermediate server $f_i$ and the aggregator generate a user list, $\mathrm{F_i}$ and $\mathrm{A}$ respectively, based on the users from whom they received masked models. Finally, the aggregator obtains the common active user list, $\mathrm{I}$ by collecting the user lists $\mathrm{F_i}$ from all intermediate servers. This list represents the active users for that round, from whom all entities received a masked model. All the intermediate servers and the aggregator use the user list $\mathsf{I}$ to aggregate the masked models. This process ensures that dropped users are identified, and their contributions are excluded from the aggregation without compromising the global model.
\paragraph*{\textbf{Model Inconsistency Attack Detection}}
In \cite{Pasquini2022}, it is explained how a malicious server (or the aggregator in our case) can carry out a model inconsistency attack. In this type of attack, as we explained earlier, the server manipulates the global model to provide different versions to specific users, potentially exposing sensitive information about them. In \cite{Pasquini2022}, one of the suggested ways to prevent issues is to make sure all users get the exact same global model from the aggregator. Our protocol aims to use this technique.
\par 
In our protocol, the aggregator sends the global model along with a MAC to the intermediate servers. These servers forward the MACs to all active users in $\mathsf{I}$. Since each user receives the same MACs from every intermediate server, they can easily detect inconsistencies by checking for mismatches. If a user observes any discrepancy, they stop participating in the training process, as this signals a model inconsistency attack\footnote{Our malicious model uses digital signatures to detect any tampering with MACs or other messages, preventing false accusations between participants.}.
\paragraph*{\textbf{Forward and Backward secrecy}}
One of the main limitations of e-SeaFL and Flamingo is their lack of FS and BS, due to reliance on a long-term secret seed as discussed in Appendix \ref{appendix:fs-bs-analysis}. This seed is used to generate per-element masking values for each FL round via PRFs. If the seed is compromised, all associated past and future masking values can be recomputed. Our protocol avoids this problem by using fresh, random masking vectors from $\mathbb{Z}_q$ in every training round. These masks are never reused and are not derived from any long-term secret. As a result, even if an adversary learns the masks from one round, it does not affect the security of previous or future rounds. Each round is thus cryptographically independent, providing strong FS and BS guarantees.

\paragraph*{\textbf{Dynamic User Settings}}
As described earlier, our protocol does not require users to pre-establish shared secret seeds to participate in a training round, unlike existing protocols such as e-SeaFL and Flamingo. Instead, users can independently select fresh, random masking vectors from $\mathbb{Z}_q$ in each training round. These vectors are eventually canceled out during aggregation, thanks to our key negation technique and the intermediate server layer. There is no requirement to register users during the initial setup phase for key exchange. Under our security model, any permitted new user can join and leave the FL training at any round without compromising correctness or security. 

\subsection{Main Protocol}
\label{sec:main-scheme}
Our protocol involves $m$ users and $n$ intermediate servers, along with an aggregator. Each user $u_i \in \mathcal{U}$ holds a private vector $x_t$ of size $n$ for the round $t$, where each element of $x_t$ belongs to $\mathbb{Z}_q$ for some $q$. Both $m$ and $n$ are polynomially bounded. The $i^{th}$ user and $j^{th}$ intermediate server are denoted as $u_i$ and $f_j$, respectively. The security of our protocol relies on the initial security parameter $\lambda$, which defines the overall strength of the protocol. Our protocol also uses a PKI, assigning each entity a private and public key pair for signing messages in the malicious model. The detailed description of our proposed secure aggregation protocol is shown in Figure \ref{fig:protocol-box}. 
\par 
The aggregator begins the protocol by sharing the initial global model with all participating users. The aggregator and the users communicate over private and authenticated channels. Each user then trains the model using their local datasets. Once trained, users mask their local models using the key negation mechanism described earlier and send the masked models to both the intermediate servers and the aggregator. This communication between users and intermediate servers also takes place over private and authenticated channels. 
\par 
The intermediate servers and aggregator collect masked models from at least $t$ users, where $t$ is the threshold required for each training round. This threshold $t$ is critical for maintaining the security of the masked models. For example, if only two users participate in a round, one user’s local model could be exposed to the other, increasing the risk of collusion. Therefore, the larger the value of $t$, the more secure the system becomes. The intermediate servers and the aggregator wait for a certain timeout period to gather enough masked models before aborting the process if an insufficient number is collected. Users are free to drop out and join at any time.
\par 
Once the intermediate servers receive the required number of masked models, they send their list of participating users, denoted as $\mathrm{F_i}$, to the aggregator if and only if $|\mathrm{F_i}|\geq t$. The aggregator generates a common active user list $\mathrm{I}$ from the lists received from the intermediate servers ${\mathrm{F_i}}_{\forall i\in (\mathcal{N}\setminus \mathrm{Agg})}$ and its own user list, $\mathrm{A}$. The aggregator then sends the list $\mathrm{I}$ to the intermediate servers. If $|\mathrm{I}|\geq t$, the intermediate servers partially aggregate the masked models of users in $\mathrm{I}$. Finally, the aggregator combines the partially aggregated values from the intermediate servers with the masked models from users in $\mathrm{I}$ to produce the plaintext global model ($\theta$) for the current training round.
\par 
Afterward, the aggregated global model $\theta$ is shared with all active participants for the next round of training. Simultaneously, the aggregator sends the tuple $V$ to each intermediate server. The intermediate server forwards a part of the tuple $V$ to the users in $\mathrm{I}$ for verification. If the verification of the tuples is successful, the users participate in the next round; otherwise, they abort.
\par 
Figure \ref{fig:protocol-box} shows two versions of the protocol: one for the semi-honest model and one for the malicious model. In the semi-honest model, where all entities adhere to the protocol without deviation, signatures and the PKI are not needed; these parts are marked in red to indicate they can be skipped. In the malicious model, signatures and the PKI are required to ensure proper authentication, message integrity, and non-repudiation.

\section{Security Analysis}
\label{sec:sec_analysis}
In this section, we present the security claims and their respective proofs for our proposed protocol discussed in Section \ref{sec:main-scheme}. 

\subsection{Semi-Honest Model}
\label{sec:Honest-but-Curious}
In this section, we demonstrate that even if a threshold number of users, denoted as $t_c$, and intermediate servers, denoted as $t_f$, collude among themselves or with the aggregator, they cannot learn about the remaining genuine users' local weighted models. We follow a similar security model to those in \cite{Bonawitz2017} and \cite{Liu2023}. 
\par 
We consider three scenarios: first, a subset of users is dishonest and collude with adversaries or among themselves, while the intermediate servers and the aggregator remain honest. Second, a subset of intermediate servers is dishonest and can collude, while the users and the aggregator are honest. Third, a subset of users and intermediate servers, along with the aggregator, are dishonest and can collude. Our goal is to show that if fewer than $t_c$ (where $t_c>2$) and $t_f$ (where $t_f> 1$) entities are compromised, our protocol still protects the individual locally trained models of the remaining genuine users. We assume that $\mathbb{U}$ and $\mathbb{F}$ are the total number of users and intermediate servers, respectively. We also assume that $\mathcal{U}_i$ are the number of participating users in the $i^{th}$ round, and $x_{\mathbb{U}}$ is the locally trained models of the users $\mathbb{U}$.
\par 
Let $\mathsf{Real_{C}^{\mathbb{U}, t_c, \lambda}}(x_{\mathbb{U}}, \mathbb{U}_1, \mathbb{U}_2, \mathbb{U}_3)$ be a random variable representing the joint views of participants in $C$ during the real execution of our protocol. Let $\mathsf{Sim_{C}^{\mathbb{U}, t_c, \lambda}}(x_{C}, \mathbb{U}_1, \mathbb{U}_2, \mathbb{U}_3)$ be the combined views of participants in $C$ when simulating the protocol, with the inputs of honest participants selected randomly and uniformly, denoted by $x_C$. Following the aforementioned idea, the distributions of $\mathsf{Real_{C}^{\mathbb{U}, t_c, \lambda}}(x_{\mathbb{U}}, \mathbb{U}_1, \mathbb{U}_2, \mathbb{U}_3)$ and $\mathsf{Sim_{C}^{\mathbb{U}, t_c, \lambda}}(x_{C}, \mathbb{U}_1, \mathbb{U}_2, \mathbb{U}_3)$ should be indistinguishable.


\begin{theorem}{(Security Against Semi-Honest Users only)}\label{theorem:1}
    For all $\mathbb{U}, \mathbb{F}, \mathsf{t_c}, \lambda$ with $|C_c|< \mathsf{t_c}, x_{\mathbb{U}}, \mathbb{U}_1, \mathbb{U}_2, \mathbb{U}_3$ and $C_c$ such that $C_c\subseteq \mathbb{U}, \mathbb{U}_3\subseteq \mathbb{U}_2 \subseteq\mathbb{U}_1 \subseteq \mathbb{U}$, there exists a probabilistic-time (PPT) simulator $\mathsf{Sim_{C}^{\mathbb{U}, t_c, \lambda}}$ which output is perfectly indistinguishable from the output of $\mathsf{Real_{C}^{\mathbb{U}, t_c, \lambda}}$:
    $$\mathsf{Sim_{C}^{\mathbb{U}, t_c, \lambda}}(x_{C_c}, \mathbb{U}_1, \mathbb{U}_2, \mathbb{U}_3)\equiv \mathsf{Real_{C}^{\mathbb{U}, t_c, \lambda}}(x_\mathbb{U}, \mathbb{U}_1, \mathbb{U}_2, \mathbb{U}_3)$$
\end{theorem}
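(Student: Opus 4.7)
The plan is to follow the standard real/ideal simulation paradigm used in Bonawitz et al.~\cite{Bonawitz2017} and Liu et al.~\cite{Liu2023}, constructing a PPT simulator whose transcript is perfectly indistinguishable from the real one. In this corruption pattern only users are corrupted, with $|C_c| < t_c$, while both the intermediate servers and the aggregator are honest. The crucial structural observation driving the proof is that our protocol has no user-to-user communication: the masked shares $c^{i,j}_t$ produced by an honest user $u_i$ travel only to honest intermediate servers $f_j$ and to the honest aggregator over authenticated, private channels. Consequently, the view of $C_c$ contains no individual honest-user ciphertext at all; it consists solely of messages originating from honest servers, which dramatically limits what the simulator must reproduce.

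First I would enumerate the view of $C_c$ in the real execution: (i) their own inputs $x_{C_c}$ and the random masks $\{k_j\}_{j \in \mathcal{N}}$ that they sample themselves, (ii) the active-user list $\mathrm{I}$ returned by the aggregator together with the auxiliary lists $\mathrm{A}$ and $\{\mathrm{F_j}\}$, (iii) the aggregated global model $\theta_t$ broadcast for the next training round, and (iv) the verification tuple $\mathrm{T} = \langle R, S \rangle$ forwarded by each intermediate server. The simulator $\mathsf{Sim}$, given $x_{C_c}$, the dropout pattern $\mathbb{U}_1, \mathbb{U}_2, \mathbb{U}_3$, and the ideal output $\theta_t$ supplied by the aggregation functionality, will: (a) sample fresh $k_j \leftarrow \mathbb{Z}_q$ for each corrupted user and each $j \in \mathcal{N}$ exactly as the real protocol prescribes, (b) reconstruct $\mathrm{I}$, $\mathrm{A}$, and $\{\mathrm{F_j}\}$ deterministically from the dropout sets, (c) echo the given $\theta_t$, and (d) draw $s_t \leftarrow \mathbb{Z}_q$ and return $R = H(\theta_t) + s_t$ together with $S = \mathtt{MAC}_{s_t}(\theta_t)$ exactly as the honest aggregator would.

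The indistinguishability argument then proceeds component-wise. Item (i) and the corrupted users' own masks are identically distributed because the simulator uses the true inputs and the same sampling distribution. The list items in (ii) are a deterministic function of the dropout pattern in both worlds. Item (iii) is identical by construction, with correctness of $\theta_t$ as the sum of the inputs of users in $\mathrm{I}$ following from the key-negation telescoping established in Equation~\ref{eq:final-agg}. For (iv), the pair $(R, S)$ has the same joint distribution since $s_t$ is uniform in $\mathbb{Z}_q$ in both executions and $(R, S)$ is a deterministic function of $(s_t, \theta_t)$. The main subtle point I expect is arguing that $C_c$ can extract no information about any individual honest $x_u$ from the transcript beyond what $\theta_t$ already reveals: since corrupted users never observe any honest $c^{i,j}_t$ individually, and the only aggregate they learn is $\theta_t$, the one-time-pad property of the fresh masks and the IND-CPA guarantee of the underlying symmetric homomorphic encryption scheme recalled in Section~\ref{sec:symmetricKeyHomoEncrypt} suffice to close the argument and establish the required perfect indistinguishability between $\mathsf{Sim_{C}^{\mathbb{U}, t_c, \lambda}}$ and $\mathsf{Real_{C}^{\mathbb{U}, t_c, \lambda}}$.
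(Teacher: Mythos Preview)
Your proposal is correct and rests on the same core observation the paper uses: because there is no user-to-user communication and all honest-user ciphertexts travel over private channels to honest intermediate servers and the honest aggregator, the corrupted users in $C_c$ never see any individual $c^{i,j}_t$ from an honest party, so the simulator can replace honest inputs without detection.

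The main difference is one of granularity and of how the final output is handled. The paper's proof is a one-paragraph sketch: it simply notes that corrupted users receive only the identities of honest users and that the simulator can ``use dummy input values for the honest users not in $C_c$'' while keeping the joint view identical. It does not separately enumerate the verification tuple $\langle R,S\rangle$ or the broadcast $\theta_t$. Your version is more explicit: you list every component of the corrupted view, and you hand the simulator the ideal aggregate $\theta_t$ so it can echo it and recompute $(R,S)$ faithfully. This is the standard real/ideal treatment and arguably tighter than the paper's sketch, but note that the theorem statement as written gives $\mathsf{Sim}$ only $(x_{C_c},\mathbb{U}_1,\mathbb{U}_2,\mathbb{U}_3)$, not $\theta_t$; the paper sidesteps this by implicitly treating the aggregate as the allowed leakage, whereas you make the ideal-functionality access explicit. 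Both routes arrive at the same conclusion via the same structural insight.
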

\begin{proof}\label{proof:1}
We present the detailed security proof in the Appendix \ref{appendix:Semi-Honest-Users-Only}.
\end{proof}

\begin{theorem}{(Security Against Semi-Honest Intermediate Servers only)}\label{theorem:2}
    For all $\mathbb{U}, \mathbb{F}, \mathsf{t_f}, \lambda$ with $|C_f|\leq \mathsf{t_f}, x_{\mathbb{U}}, \mathbb{U}_1, \mathbb{U}_2, \mathbb{U}_3$ such that $C_f\subseteq \mathbb{F}$, $\mathbb{U}_3\subseteq \mathbb{U}_2 \subseteq\mathbb{U}_1 \subseteq \mathbb{U}$, there exists a probabilistic-time (PPT) simulator $\mathsf{Sim_{C}^{\mathbb{F}, t_f, \lambda}}$ which output is perfectly indistinguishable from the output of $\mathsf{Real_{C}^{\mathbb{F}, t_f, \lambda}}$:
    $$\mathsf{Sim_{C}^{\mathbb{F}, t_f, \lambda}}(x_{C_f}, \mathbb{U}_1, \mathbb{U}_2, \mathbb{U}_3)\equiv \mathsf{Real_{C}^{\mathbb{F}, t_f, \lambda}}(x_\mathbb{U}, \mathbb{U}_1, \mathbb{U}_2, \mathbb{U}_3)$$
\end{theorem}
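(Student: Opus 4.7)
The plan is to mirror the simulation-based argument used for Theorem~\ref{theorem:1}, but with the coalition now consisting of semi-honest intermediate servers in $C_f \subseteq \mathbb{F}$, while the users and the aggregator remain honest. First, I would enumerate what the coalition actually sees in one training round: for every honest user $u_i \in \mathbb{U}_1$ and every $f_j \in C_f$, the masked share $c^{i,j}_t = x^i_t/n + k^i_j - k^i_{j-1} \bmod q$; the per-server lists $\{\mathrm{F_j}\}_{j \in C_f}$ that the coalition assembles itself; the common active list $\mathrm{I}$ forwarded by the aggregator; and the verification tuple $V = \langle R, S \rangle$ where $R = H(\theta_t) + s_t$ and $S = \mathtt{MAC}_{s_t}(\theta_t)$. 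Crucially, because the aggregator is honest and communicates with the honest intermediate servers over private, authenticated channels, the coalition sees neither the partial aggregations produced by honest servers nor the plaintext $\theta_t$ itself.

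With the view fixed, I would build the simulator $\mathsf{Sim}_C^{\mathbb{F}, t_f, \lambda}$ as follows: for each $(u_i, f_j) \in \mathbb{U}_1 \times C_f$ sample $\tilde{c}^{i,j}_t \overset{\$}{\leftarrow} \mathbb{Z}_q^{|x_t|}$; reconstruct $\mathrm{F_j}$ and $\mathrm{I}$ deterministically from the bookkeeping sets $\mathbb{U}_2, \mathbb{U}_3$; and sample $\tilde R, \tilde S$ uniformly from their respective ranges. Indistinguishability would then be established by a short three-hop hybrid. Hybrid $H_1$ replaces each $c^{i,j}_t$ by a fresh uniform value; this hop is information-theoretic and hinges on the claim that, for any fixed honest user $u_i$, the random vector $(k^i_j - k^i_{j-1})_{j \in C_f}$ is distributed uniformly on $\mathbb{Z}_q^{|C_f|}$ whenever $C_f$ is a proper subset of $\mathcal{N}$. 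Since the aggregator lies outside $C_f$ by hypothesis, $|C_f| \leq |\mathcal{N}| - 1$, so the claim applies and $H_1 \equiv H_0$. Hybrid $H_2$ replaces $R$ by a uniform value; this is a perfect one-time-pad step because $s_t$ is freshly and independently uniform in $\mathbb{Z}_q$. Hybrid $H_3$ replaces $S$ by a uniform value in the MAC range; this is the only computational hop and relies on the pseudorandomness of the MAC keyed by the unseen $s_t$. Composing the three hops yields the stated simulator.

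The main obstacle is nailing down the uniformity claim underlying $H_1$. Concretely, I must show that for i.i.d.\ uniform masks $(k^i_0, \dots, k^i_{n-1}) \in \mathbb{Z}_q^n$, the projection onto any $|C_f| \leq n - 1$ cyclic differences is uniform on $\mathbb{Z}_q^{|C_f|}$. This reduces to a linear-algebra observation: the linear map sending a mask vector to its vector of cyclic differences has kernel exactly the all-ones line and therefore rank $n - 1$, so its restriction to any subset of at most $n - 1$ coordinates is still surjective; uniformity of the image then follows from uniformity of the preimage together with the equal-cardinality fibres. A secondary subtlety is that the $H_2 \to H_3$ hop needs the MAC to behave as a PRF rather than merely be UF-CMA; this is standard for the instantiations (e.g., HMAC) implicitly assumed in the paper and will be flagged explicitly. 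Once these two points are settled, the remaining hops are routine and Theorem~\ref{theorem:2} follows.
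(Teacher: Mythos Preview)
Your core handling of the masked shares matches the paper's very brief one-time-pad sketch (Appendix~\ref{appendix:semi-honest-intermediate-servers-only}), and your linear-algebra justification for $H_1$---the cyclic-difference map has image the sum-zero hyperplane, so projecting onto any proper coordinate subset is surjective with equal-cardinality fibres---is a genuine sharpening the paper does not supply. You also go further than the paper by including the verification tuple $(R,S)$ in the coalition's view; the paper's proof of Theorem~\ref{theorem:2} never mentions it.

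That extra care, however, exposes a gap in your $H_2$ hop. You call $R = H(\theta_t)+s_t$ a one-time-pad step, but the pad $s_t$ is \emph{reused} as the MAC key in $S=\mathtt{MAC}_{s_t}(\theta_t)$, so $R$ and $S$ are correlated: for fixed $\theta_t$ the real pair lies on the one-parameter curve $\{(H(\theta_t)+s,\,\mathtt{MAC}_s(\theta_t)):s\in\mathbb{Z}_q\}$, whereas after your $H_2$ it would live on a product set. These are not identically distributed, so $H_1\not\equiv H_2$ information-theoretically. In effect $(R,S)$ is a checkable encoding of $\theta_t$: anyone can guess $\theta'$, set $s'=R-H(\theta')$, and test $S\stackrel{?}{=}\mathtt{MAC}_{s'}(\theta')$. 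What saves the argument is that after $H_1$ the coalition's shares carry no information about the $x_i$, hence none about $\theta_t=\sum_i x_i$; you need to invoke that hiding explicitly and swap $(R,S)$ for uniforms in a single combined hop (which will then be computational, not perfect---another reason the ``perfectly indistinguishable'' in the theorem statement is already too strong once the verification phase is included). Reordering to do $H_3$ first does not help either, since a PRF reduction would not possess $s_t$ and therefore could not form $R$.
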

\begin{proof}\label{proof:2}
We present the detailed security proof in the Appendix \ref{appendix:semi-honest-intermediate-servers-only}.
\end{proof}
\begin{theorem}{(Security Against Semi-Honest Aggregator and Semi-Honest Intermediate Servers)}\label{theorem:3}
    For all $\mathbb{U}, \mathbb{F}, \mathsf{t_c, t_f}, \lambda$ with $|C_c|\leq \mathsf{t_c}, |C_f|\leq \mathsf{t_f}, x_{\mathbb{U}}, \mathbb{U}_1, \mathbb{U}_2, \mathbb{U}_3$ such that $C_c\subseteq \mathbb{U}, C_f\subseteq \mathbb{F}$, $\mathbb{U}_3\subseteq \mathbb{U}_2 \subseteq\mathbb{U}_1 \subseteq \mathbb{U}$, there exists a probabilistic-time (PPT) simulator $\mathsf{Sim_{C}^{\mathbb{U}, \mathbb{F}, t_c, t_f, \lambda}}$ which output is perfectly indistinguishable from the output of $\mathsf{Real_{C}^{\mathbb{U}, \mathbb{F}, t_c, t_f, \lambda}}$:
      \begin{align}
    &\mathsf{Sim_{C}^{\mathbb{U}, \mathbb{F}, t_c, t_f, \lambda}}(x_{C_c}\cup x_{C_f}, \mathbb{U}_1, \mathbb{U}_2, \mathbb{U}_3)\equiv \notag\\
    &\mathsf{Real_{C}^{\mathbb{U}, \mathbb{F}, t_c, t_f, \lambda}}(x_\mathbb{U}, \mathbb{U}_1, \mathbb{U}_2, \mathbb{U}_3) \notag
    \end{align}
\end{theorem}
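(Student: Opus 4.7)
The plan is to follow the simulation-based approach of Theorems~\ref{theorem:1} and~\ref{theorem:2}, fusing both adversarial capabilities into a single PPT simulator that handles the simultaneous corruption of up to $t_c$ users and up to $t_f$ intermediate servers together with the aggregator. The critical structural fact---inherited from the analysis of the key-negation mechanism in Section~\ref{sec:tech_intuition} and the IND-CPA property of the underlying symmetric homomorphic encryption in Section~\ref{sec:symmetricKeyHomoEncrypt}---is that as long as at least one intermediate server remains honest (which is guaranteed since $|C_f|\leq t_f<|\mathbb{F}|$), every masked value $c^{i,j}_t = x_t/n + k_j^{(i)} - k_{j-1}^{(i)}$ sent by an honest user $u_i$ to a corrupted entity is perfectly one-time-padded, and the joint marginal of such values is uniform over $\mathbb{Z}_q^{|C_{\mathcal{N}}|}$, where $C_{\mathcal{N}}:=C_f\cup\{\mathrm{Agg}\}$.

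I would construct $\mathsf{Sim}_{C}^{\mathbb{U},\mathbb{F},t_c,t_f,\lambda}$ in three stages. First, each corrupted user $u_i\in C_c$ is driven honestly on its true input, producing genuine $c^{i,j}_t$ values and signatures. Second, for every honest user $u_i$ and every corrupted entity $j\in C_{\mathcal{N}}$, the simulator samples $\tilde c^{i,j}_t\leftarrow \mathbb{Z}_q$ uniformly and independently. Third, for each honest intermediate server $f_{j^*}$, the simulator samples a uniform partial aggregate $\widetilde{\mathsf{PartialAgg}}_{f_{j^*}}$ for all but one designated index $j^{**}$, and fixes the last by the aggregate constraint
\[
\widetilde{\mathsf{PartialAgg}}_{f_{j^{**}}} = S - \sum_{i\in \mathrm{I},\, j\in C_{\mathcal{N}}} \tilde c^{i,j}_t - \sum_{j^*\ne j^{**}} \widetilde{\mathsf{PartialAgg}}_{f_{j^*}},
\]
where $S=\sum_{i\in \mathrm{I}} x_t^{(i)}$ is the aggregate output provided to $\mathsf{Sim}$ by the ideal functionality (equivalently, reconstructed from the revealed $\theta_t$). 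The remaining transcript---user lists $\mathrm{F_j},\mathrm{A},\mathrm{I}$, the global model $\theta_t$, and the verification tuple $V$---is a deterministic post-processing of the above and is emitted identically to the real protocol, so it contributes nothing additional to the joint view.

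The main obstacle is establishing perfect indistinguishability between the real and simulated views while carefully accounting for the correlations introduced by reusing each honest user's mask vector $(k_1^{(i)},\ldots,k_n^{(i)})$ across its $n$ outgoing messages. The delicate point is that each honest user's full $n$-tuple $(c^{i,1}_t,\ldots,c^{i,n}_t)$ satisfies $\sum_{j\in\mathcal{N}} c^{i,j}_t = x_t^{(i)}$, so a naive argument would threaten to leak $x_t^{(i)}$ as soon as $n-1$ coordinates are revealed. I would dispose of this by proving that the linear map $(k_1^{(i)},\ldots,k_n^{(i)})\mapsto (c^{i,1}_t,\ldots,c^{i,n}_t)$ has exactly the one-dimensional diagonal kernel, so its projection onto any \emph{strict} subset of coordinates is surjective with uniform preimage; consequently, the adversary's marginal over the honest-user masked messages directed to $C_{\mathcal{N}}$ is uniform whenever at least one entity in $\mathcal{N}$ is honest. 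The only residual coupling is the single cross-user linear constraint absorbed by $\widetilde{\mathsf{PartialAgg}}_{f_{j^{**}}}$, which is precisely the aggregate $S$ that the ideal functionality is already permitted to leak. Wrapping these observations in a hybrid argument---replacing one honest user's real masked vector with a uniform one at a time, and invoking Theorems~\ref{theorem:1} and~\ref{theorem:2} as base cases---yields the required perfect equivalence and completes the proof.
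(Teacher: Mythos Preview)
Your proposal is correct and follows the same simulation-based paradigm as the paper, but the decomposition differs. The paper's proof (Appendix~\ref{proof:semihonest}) proceeds through a short chain of hybrids $\mathsf{Hyb_0}$--$\mathsf{Hyb_4}$: first replacing honest users' trained models by PRF outputs, then swapping the key-negation masks for independent one-time pads, then reverting to the structured masks, and finally substituting the honest inputs $x_i$ for the dummies $r_i$ subject to the global sum constraint $\sum r_i=\sum x_i$, each transition justified by the IND-CPA security of the symmetric homomorphic scheme from Section~\ref{sec:symmetricKeyHomoEncrypt}. Your route instead builds the simulator directly---uniform samples for the corrupted-destined masked values, uniform partial aggregates for all but one honest intermediate server, and a single constrained aggregate absorbing $S$---and then argues perfect indistinguishability via the linear-algebraic observation that the circulant map $(k_1,\ldots,k_n)\mapsto(c^{i,1},\ldots,c^{i,n})$ has a one-dimensional kernel, so its projection onto any strict coordinate subset is uniformly surjective. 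Your treatment is more explicit about the honest intermediate servers' partial aggregates in the aggregator's view (which the paper's hybrids touch only implicitly), and your kernel argument makes transparent why a single honest server suffices; the paper's hybrid chain, on the other hand, is closer to the standard cryptographic template and requires less bespoke structural analysis. Both arrive at the same perfect-indistinguishability conclusion.
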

\begin{proof}
    We present the detailed security proof in the Appendix \ref{proof:semihonest}. 
\end{proof}

\subsection{Malicious Model}
\label{sec:active-adversary}
In this section, we follow the standard security model as in \cite{Bonawitz2017} and \cite{Liu2023} to demonstrate that our protocol is secure against active malicious attackers. 
\par 
In this security model, we need to consider three cases. The first case is the \emph{Sybil Attack}, where the adversaries can simulate a specific honest user $u_i$ to get its inputs. Please note that our protocol is resistant to this attack, as it uses digital signatures to verify the origin of the messages. The second case involves the aggregator sending different lists of participating users in a round to the honest intermediate servers. This could lead to the disclosure of the local weighted models of the honest users. Our protocol detects this type of attack during the model parameter verification phase, where each user can verify the list of participating users in a round received from the intermediate servers. Any difference in the lists will indicate an attack. 
\par 
The third case involves an aggregator dishonestly dropping honest users at any round. We use the Random Oracle model to prove that our protocol is secure against any malicious dropout of honest users by adversaries. Let's consider $M_c$ as a probabilistic polynomial-time algorithm representing the \emph{next message} function of participants in $C$, which enables users in $C$ to dynamically select their inputs at any round of the protocol and the list of participating users. The Random Oracle can output the sum of a dynamically selected subset of honest clients for the simulator $\mathrm{Sim}$, which is indistinguishable from the combined view of adversaries in the real protocol execution $\mathsf{Real}(M_c)$. There are three possible scenarios for this type of active attack: first, the malicious users can collude among themselves; second, the malicious intermediate servers can collude among themselves; and finally, the malicious users and intermediate servers can collude with the aggregator. We present the following three theorems to demonstrate our security proofs.

\begin{table*}[htbp]
\centering
\caption{Running Time (ms) for User in Different Phases with a Vector Size of $48k$ and $64$-bit Length.}
\label{table:running-time-semi-honest-model}
\begin{tabular}{|p{3.5cm}|cc|cc|cc|cc|cc|}
\hline
Number of Users & \multicolumn{2}{c|}{$100$} & \multicolumn{2}{c|}{$200$} & \multicolumn{2}{c|}{$300$} & \multicolumn{2}{c|}{$400$} & \multicolumn{2}{c|}{$500$}\\ \hline
Scheme & \multicolumn{1}{c|}{e-SeaFL} & Ours & \multicolumn{1}{c|}{e-SeaFL} & Ours  & \multicolumn{1}{c|}{e-SeaFL} & Ours  & \multicolumn{1}{c|}{e-SeaFL} & Ours & \multicolumn{1}{c|}{e-SeaFL} & Ours\\ \hline
Masking Time$^*$ & \multicolumn{1}{c|}{$201432$} & $2193$ & \multicolumn{1}{c|}{$202927$} & $2324$ & \multicolumn{1}{c|}{$210059$} & $2487$ & \multicolumn{1}{c|}{$207008$} & $2329$ & \multicolumn{1}{c|}{$190656$} & $2199$ \\ \hline
Masking Time$^\#$ & \multicolumn{1}{c|}{202355} & $4896$  & \multicolumn{1}{c|}{209494} & $5113$ & \multicolumn{1}{c|}{212300} & $5241$ & \multicolumn{1}{c|}{205440} & $5004$ & \multicolumn{1}{c|}{207617} & $5218$ \\ \hline
Model Verification$^*$ & \multicolumn{1}{c|}{$93657498$} & $3864$ & \multicolumn{1}{c|}{92982871} & $3831$& \multicolumn{1}{c|}{92806374} & $3822$ & \multicolumn{1}{c|}{91344514} & $3845$ & \multicolumn{1}{c|}{91844664} & $3808$ \\ \hline
Model Verification$^\#$ & \multicolumn{1}{c|}{96533813} & $3734$ & \multicolumn{1}{c|}{96553259} & $3767$ & \multicolumn{1}{c|}{93273166} & $3826$ & \multicolumn{1}{c|}{96467621 } & $3962$ & \multicolumn{1}{c|}{95497198} & $3958$ \\ \hline
Total User Running Time$^*$ & \multicolumn{1}{c|}{194063757} & $6057$ & \multicolumn{1}{c|}{185477547} & $6155$ & \multicolumn{1}{c|}{190296465} & $6309$ & \multicolumn{1}{c|}{188311203} & $6174$ & \multicolumn{1}{c|}{188888499} & $6007$ \\ \hline
Total User Running Time$^\#$ & \multicolumn{1}{c|}{195167142} & $8630$ & \multicolumn{1}{c|}{ 196882805 } &$8880$& \multicolumn{1}{c|}{192445184 } & $9067$ & \multicolumn{1}{c|}{194149205 } & $8966$ & \multicolumn{1}{c|}{195775738 } &  $9176$\\ \hline
\end{tabular}
\\
$*$ represent cost in the semi-honest model; \# represent cost in the malicious model.
\end{table*}

\begin{figure*}[htbp]
    \centering
    \begin{subfigure}{0.4\textwidth}
        \centering
        \includegraphics[width=6.5cm, height=4.5cm]{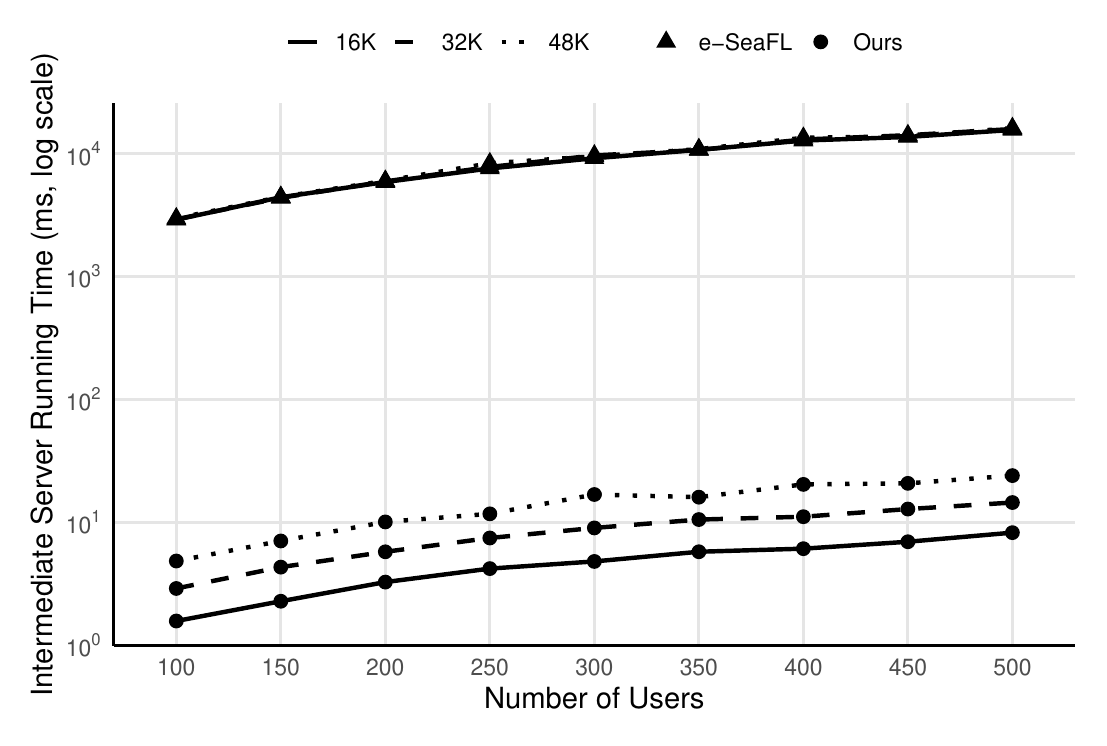}
        \caption{Intermediate Server}
        \label{fig:Semi-honest-Intermediate-Server-Time-with-Users}
    \end{subfigure}
    ~
    \begin{subfigure}{0.4\textwidth}
        \centering
        \includegraphics[width=6.5cm, height=4.5cm]{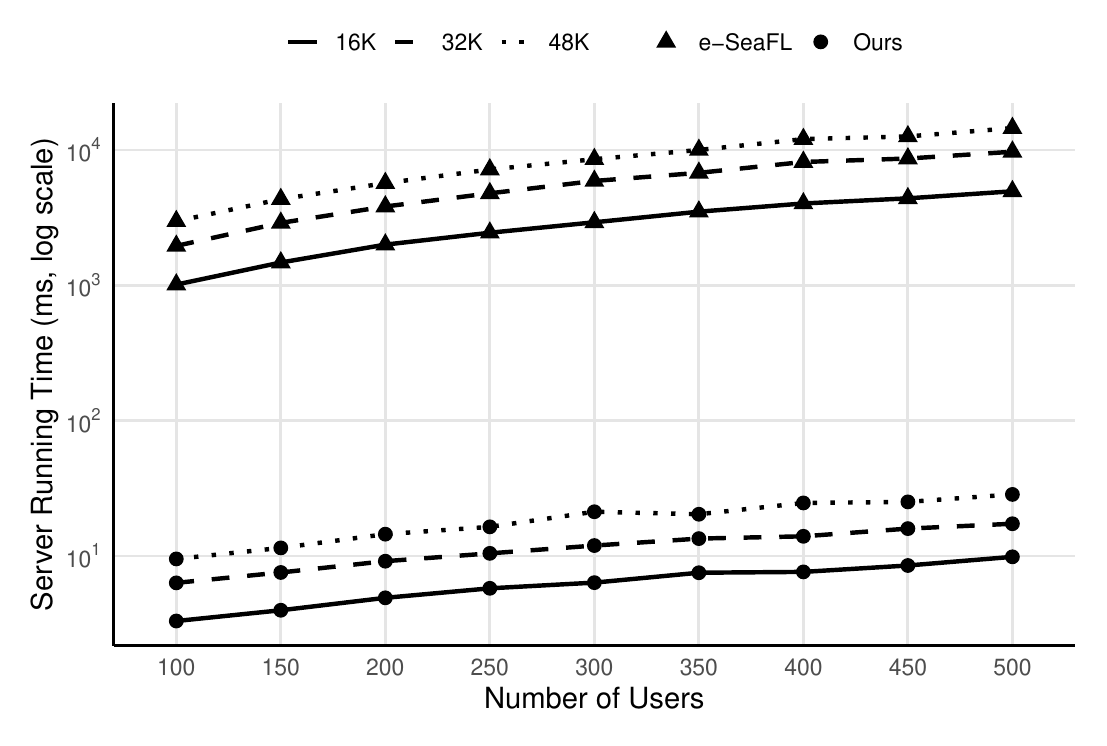}
        \caption{Aggregator}
        \label{fig:Semi-honest-Server-Time-With-Users}
    \end{subfigure}
    \caption{Average Running Time of Intermediate Server and Aggregator in the Semi-Honest Model per Round}
    \label{fig:1-semi-honest-ComputationTime-withoutDropout}
\end{figure*}

\begin{figure*}[htbp]
     \centering
     \begin{subfigure}[htbp]{0.44\textwidth}
         \centering
         \includegraphics[width=6.5cm, height=4.5cm]{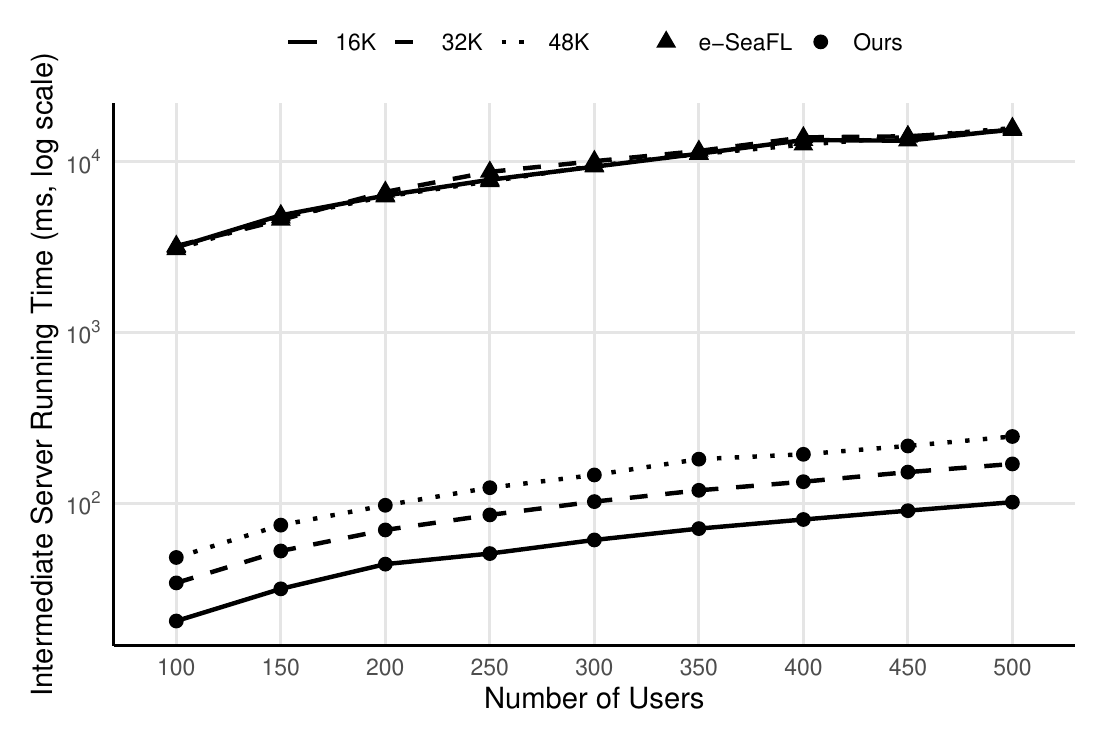}
         \caption{Intermediate Server}
         \label{fig:Malicious-Intermediate-Server-Time-with-Users}
     \end{subfigure}
     ~
     \begin{subfigure}[htbp]{0.44\textwidth}
         \centering
         \includegraphics[width=6.5cm, height=4.5cm]{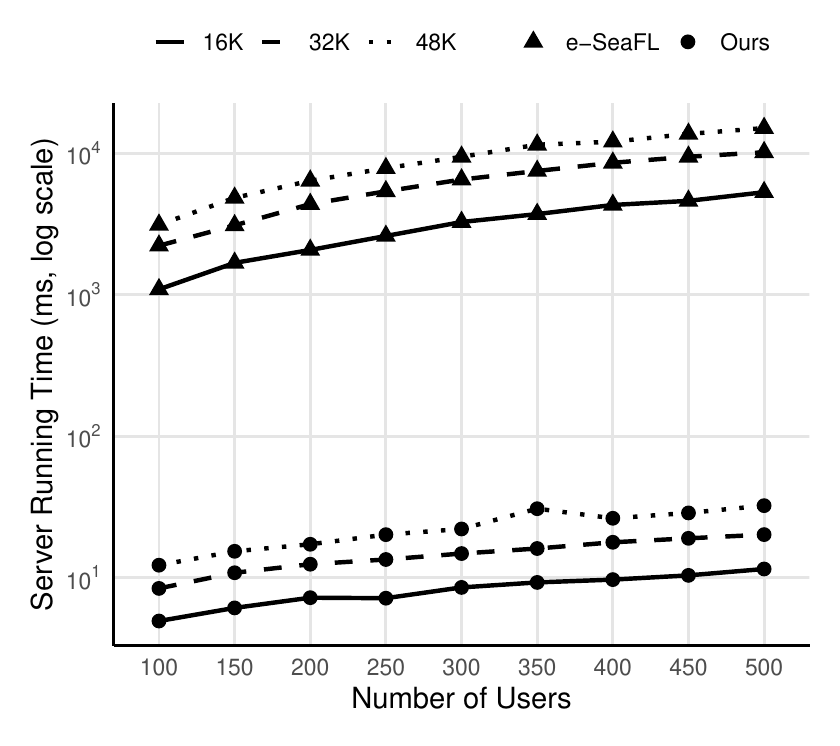}
         \caption{Aggregator}
         \label{fig:Malicious-Server-Time-with-Users}
     \end{subfigure}
        \caption{Average Running Time of Intermediate Server and Aggregator in the Malicious Model per Round}
        \label{fig:1-malicious-ComputationTime-withoutDropout}
\end{figure*}

\begin{figure*}[htbp]
     \centering
     \begin{subfigure}[htbp]{0.44\textwidth}
         \centering
         \includegraphics[width=6.5cm, height=4.5cm]{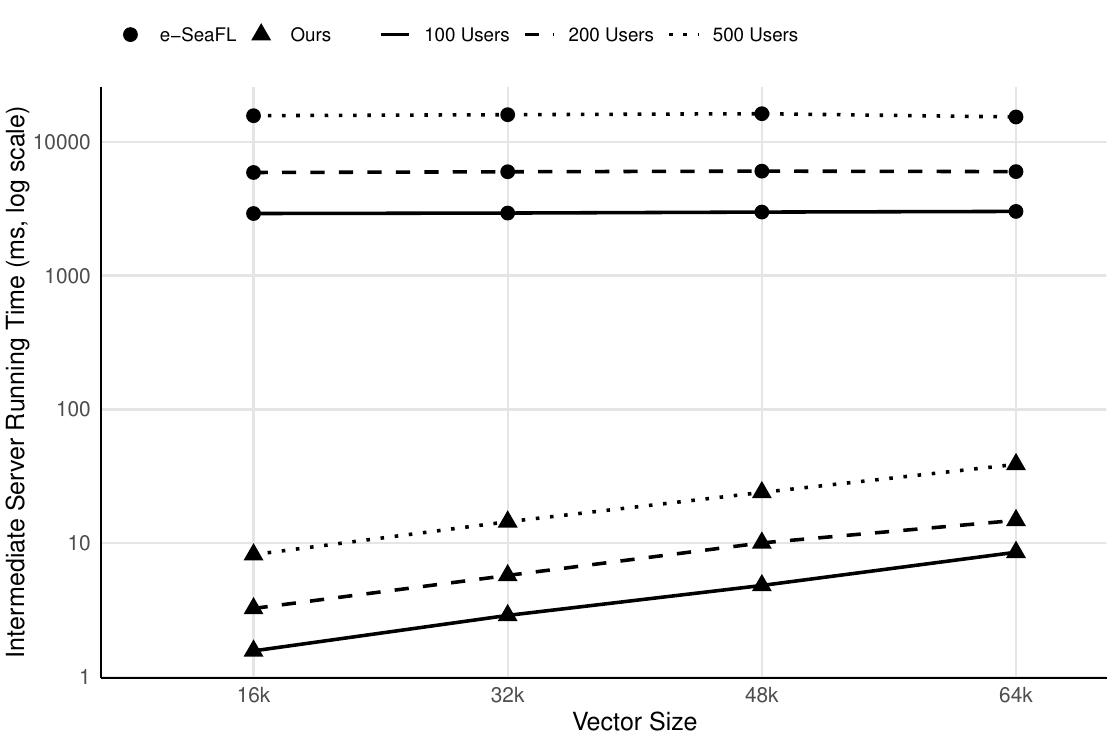}
         \caption{Intermediate Server}
         \label{fig:Semi-honest-Intermediate-Server-Time-With-Vector-Size}
     \end{subfigure}
     ~
     \begin{subfigure}[htbp]{0.44\textwidth}
         \centering
         \includegraphics[width=6.5cm, height=4.5cm]{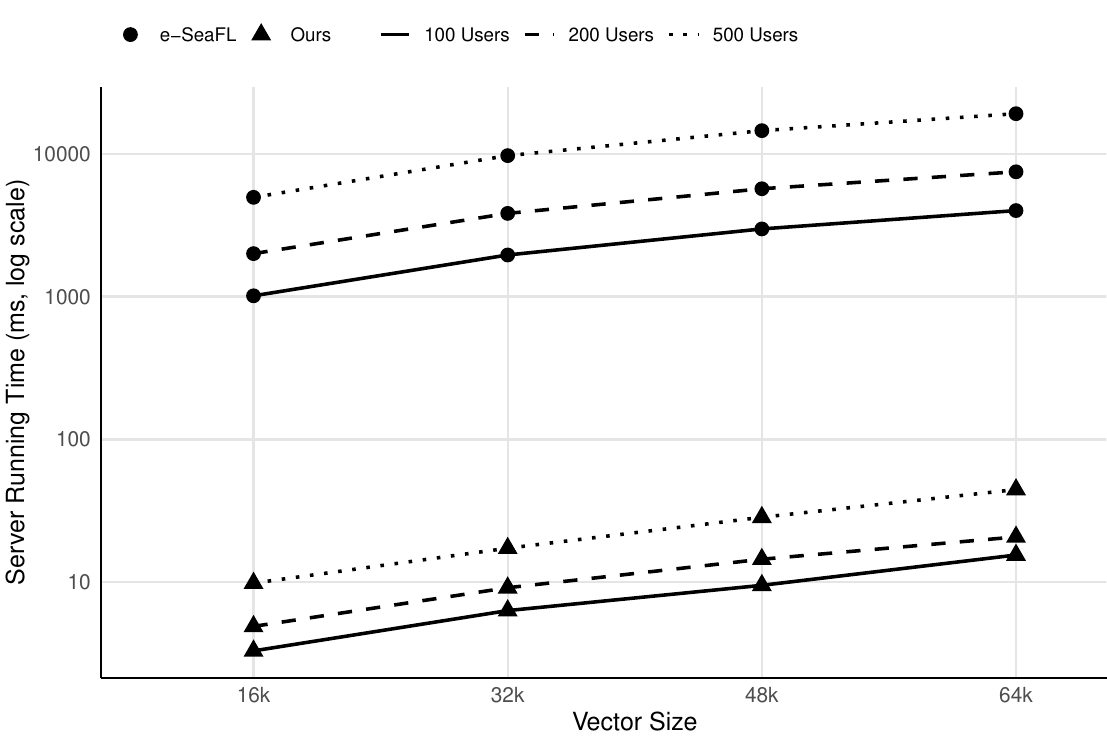}
         \caption{Aggregator}
         \label{fig:Semi-honest-Server-Time-With-Vector-Size}
     \end{subfigure}
        \caption{Average Running Time of Intermediate Server and Aggregator in the Semi-Honest Model per Round}
        \label{fig:2-semi-honest-ComputationTime-withoutDropout}
\end{figure*}

\begin{figure*}[htbp]
     \centering
     \begin{subfigure}[htbp]{0.4\textwidth}
         \centering
         \includegraphics[width=6.5cm, height=4.5cm]{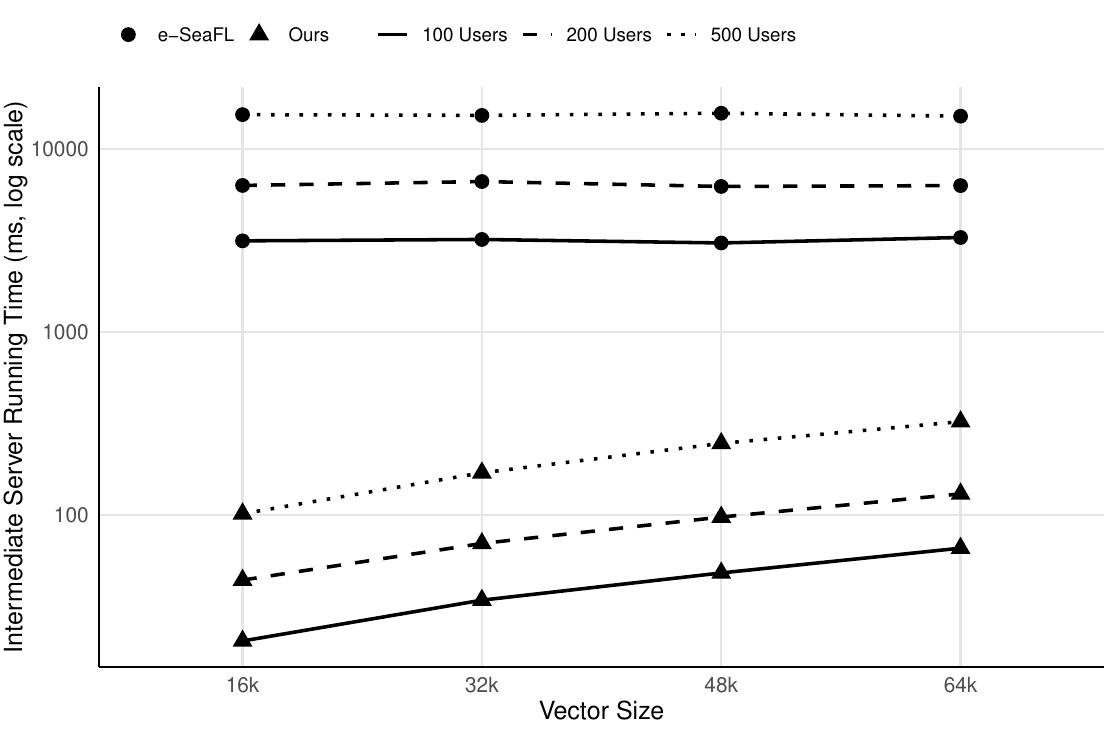}
         \caption{Intermediate Server}
         \label{fig:Malicious-Intermediate-Server-Running-Time-with-Vector-Siz}
     \end{subfigure}
     ~
     \begin{subfigure}[htbp]{0.4\textwidth}
         \centering
         \includegraphics[width=6.5cm, height=4.5cm]{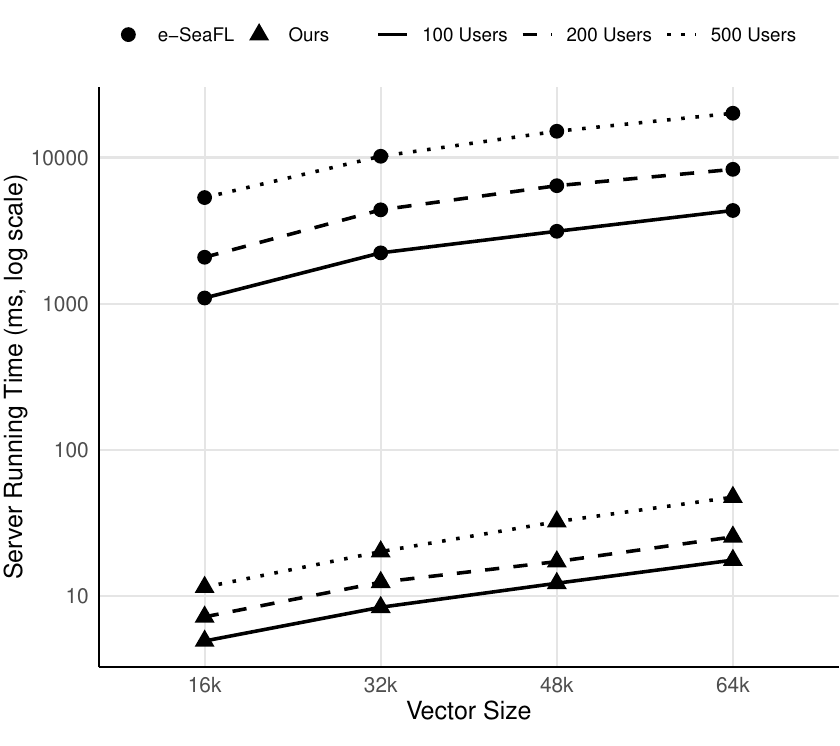}
         \caption{Aggregator}
         \label{fig:Malicious-Server-Running-Time-with-Vector-Size}
     \end{subfigure}
        \caption{Average Running Time of Intermediate Server and Aggregator in the Malicious Model per Round}
        \label{fig:2-malicious-ComputationTime-withoutDropout}
\end{figure*}

\begin{theorem}{(Security Against Dishonest Users only)}\label{theorem:4}
For all $\mathbb{U}, \mathbb{F}, \mathsf{t_c}, \lambda$ with $|C_c|\leq \mathsf{t_c}, x_{\mathbb{U}\setminus C_c}, \mathbb{U}$ and $C_c$ such that $C_c\subseteq \mathbb{U}$, there exists a probabilistic-time (PPT) simulator $\mathsf{Sim_{C}^{\mathbb{U}, t_c, \lambda}}$ which output is perfectly indistinguishable from the output of $\mathsf{Real_{C}^{\mathbb{U}, t_c, \lambda}}$:
$$\mathsf{Sim_{C}^{\mathbb{U}, t_c, \lambda}}(M_c, x_{\mathbb{U}\setminus C_c})\equiv \mathsf{Real_{C}^{\mathbb{U}, t_c, \lambda}}(M_c)$$
\end{theorem}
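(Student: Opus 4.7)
The plan is to build a probabilistic polynomial-time simulator $\mathrm{Sim}$ that produces the joint view of the corrupted users in $C_c$ given only $M_c$ and the honest users' inputs $x_{\mathbb{U}\setminus C_c}$. The crucial observation is that in the dishonest-users-only setting, the intermediate servers and the aggregator follow the protocol honestly, so the corrupted users never directly observe the masked contributions of honest users; the only information they obtain that depends on honest inputs is the aggregate $\theta_t$, the common active list $\mathrm{I}$, the per-server lists $\{\mathrm{F_j}\}$, and the verification tuple $V$ that the aggregator broadcasts each round.

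First, $\mathrm{Sim}$ honestly generates the PKI key material for the honest parties, samples their protocol randomness, and interacts with $M_c$ to obtain each corrupted user's masked vectors, signatures, and dropout behaviour in every round. It plays the roles of the honest users, the intermediate servers, and the aggregator step by step against the adversary: it follows the masking and partial aggregation logic from Figure~\ref{fig:protocol-box} to assemble the lists $\{\mathrm{F_j}\}$, $\mathrm{A}$, and $\mathrm{I}$ exactly as in the real execution. When it is time for the aggregator to deliver $\theta_t$, the simulator computes $\theta_t = \sum_{u \in \mathrm{I}} x_u$ directly using $x_{\mathbb{U}\setminus C_c}$ for the honest users in $\mathrm{I}$ and the $M_c$-supplied inputs for the corrupted ones in $\mathrm{I}$, and then forms $V$ honestly with a freshly sampled $s_t$ and $\mathrm{MAC}_{s_t}(\theta_t)$.

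Indistinguishability is established by a short hybrid argument. Hybrid $H_0$ is the real execution. In $H_1$, every random mask used by an honest user for a message destined to an honest receiver is replaced by a fresh uniform element of $\mathbb{Z}_q$; this change is perfectly distribution-preserving because those masks act as one-time pads, and crucially those messages never enter the corrupted users' view since all receivers are honest. In $H_2$, the aggregator's plaintext output is computed directly from the inputs rather than by summing the masked ciphertexts, which is identically distributed by correctness of the key-negation mechanism (Equation~\ref{eq:final-agg}). The signed messages delivered to corrupted users retain their real distribution, and UF-CMA security of $\mathsf{SIG}$ together with the PKI assumption ensures that a Sybil-style forgery of an honest party's signature, which would be the only avenue for the adversary to perturb $\mathrm{I}$ or $V$ in a distinguishable way, occurs only with negligible probability.

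The main obstacle is reasoning about adaptivity: $M_c$ may choose each round's inputs, signatures, and dropout patterns as a function of prior transcripts, including $\theta_t$ values from earlier rounds and the dynamic join/leave history permitted by the protocol. Handling this cleanly requires that the simulation be straight-line, with no rewinding, so that each round can be answered on the fly as the Random Oracle is queried. This is enabled by two facts: the honest parties' protocol randomness in round $t$ is independent of all earlier randomness, so the replacement of masks by uniform values in $H_1$ propagates round-by-round without any cross-round consistency constraint; and the aggregate $\theta_t$, being the only round-$t$ adversarial observable influenced by the honest users, is a deterministic function of $\mathrm{I}$ and $x_{\mathbb{U}}$ that $\mathrm{Sim}$ can evaluate from its inputs. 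Together these observations collapse $H_0$ and the final hybrid into perfectly indistinguishable distributions, which is the claimed equivalence.
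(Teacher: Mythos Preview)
Your proposal is correct and rests on the same core observation as the paper: since the intermediate servers and aggregator are honest in this scenario, corrupted users never see honest users' masked contributions over the private channels, so the simulator's only nontrivial task is to reproduce the aggregate $\theta_t$ and the accompanying verification data. The paper's own proof is a three-sentence reduction to Theorem~\ref{theorem:1} asserting that the simulator can assign dummy inputs to honest users; your version is considerably more explicit about computing $\theta_t$ from $x_{\mathbb{U}\setminus C_c}$, handling adaptivity round-by-round, and invoking UF-CMA security of the signature scheme, but the underlying argument is the same.
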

\begin{proof}
    This proof is similar to that of Theorem \ref{theorem:1}, as the users in $C_c$ gain no knowledge about $x_{\mathbb{U}\setminus C_c}$. Consequently, the simulator $\mathsf{Sim}$ can assign real inputs to the dishonest users while assigning dummy inputs to the remaining users, accurately replicating the perspectives of the users in $C_c$. Thus, the joint views of the users in $C_c$ in the simulation are indistinguishable from those in $\mathsf{Real}$.
\end{proof}

\begin{theorem}{(Security Against Dishonest Intermediate Servers only)}\label{theorem:5}
For all $\mathbb{U}, \mathbb{F}, \mathsf{t_f}, \lambda$ with $|C_f|\leq \mathsf{t_f}, x_{\mathbb{U}\setminus C_f}, \mathbb{U}$ and $C_f$ such that $C_f\subseteq \mathbb{F}$, there exists a probabilistic-time (PPT) simulator $\mathsf{Sim_{C}^{\mathbb{F}, t_f, \lambda}}$ which output is perfectly indistinguishable from the output of $\mathsf{Real_{C}^{\mathbb{F}, t_f, \lambda}}$:
$$\mathsf{Sim_{C}^{\mathbb{F}, t_f, \lambda}}(M_c, x_{\mathbb{F}\setminus C_f})\equiv \mathsf{Real_{C}^{\mathbb{F}, t_f, \lambda}}(M_c)$$
\end{theorem}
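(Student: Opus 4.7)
The plan is to reduce Theorem~\ref{theorem:5} to the semi-honest intermediate-server argument established in Theorem~\ref{theorem:2}, while handling the additional adversarial capabilities of the malicious model via the \emph{next message} function $M_c$ in the Random Oracle model, analogously to how Theorem~\ref{theorem:4} extends Theorem~\ref{theorem:1}. First, I would catalogue the view of the corrupted intermediate servers in $C_f$ during one training round: the masked values $c^{i,j}_t = \tfrac{x_t}{n} + k_j - k_{j-1} \bmod q$ received from each participating user $u_i$ for every $f_j \in C_f$, the signed common active user list $\langle \mathrm{I}, \sigma_{\mathsf{Agg}} \rangle$ from the (honest) aggregator, and the verification tuple $\langle R, S \rangle$ with $R = H(\theta_t) + s_t$ and $S = \mathsf{MAC}_{s_t}(\theta_t)$. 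Together with the signatures attached to each message, this fully determines what $C_f$ can condition on.

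Next, I would construct the simulator $\mathsf{Sim}$ as follows. For every honest user, $\mathsf{Sim}$ substitutes a dummy input (e.g., the zero vector) and samples fresh random $k_j, k_{j-1} \in \mathbb{Z}_q$ to produce the simulated ciphertexts delivered to $C_f$. Because the mask routed to at least one honest intermediate server (which exists whenever $|C_f| < |\mathbb{F}|$) is never observed by the adversary, every ciphertext handed to $C_f$ is blinded by an independent, uniformly random element of $\mathbb{Z}_q$, so the simulated and real distributions are identical by the one-time pad argument already used in Theorem~\ref{theorem:2}. Signatures produced by honest parties are generated with the honest-party keys controlled by $\mathsf{Sim}$, and the verification tuple is simulated by picking $R$ uniformly at random and programming the random oracle at $\theta_t$ consistently whenever $C_f$ (or any colluder) queries $H(\cdot)$; the one-time mask $s_t$ ensures that both $R$ and $S$ are distributed uniformly in the adversary's view.

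The hard part will be arguing indistinguishability under the dynamic behaviour driven by $M_c$: the adversary may instruct corrupted intermediate servers to abort, drop selected users from their $\mathrm{F_j}$ lists, or forward tampered partial aggregations to the aggregator. I would resolve this by observing that, since the aggregator is honest in this theorem, such misbehaviour only changes the \emph{correctness} of the final output (causing an abort if signatures fail or the threshold $t$ is not met, or producing a predictably modified aggregate) and reveals no additional information about honest inputs beyond what the masked ciphertexts already expose. The simulator tracks the active set $\mathrm{I}$ maintained by the honest aggregator and invokes the Random Oracle ideal functionality to answer any downstream sum-consistency queries. Putting everything together, the joint view produced by $\mathsf{Sim}^{\mathbb{F}, t_f, \lambda}_{C}(M_c, x_{\mathbb{F}\setminus C_f})$ is distributed identically to that of $\mathsf{Real}^{\mathbb{F}, t_f, \lambda}_{C}(M_c)$, establishing the theorem.
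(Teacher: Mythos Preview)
Your proposal is correct and follows essentially the same approach as the paper: both reduce the malicious intermediate-server case to the semi-honest argument of Theorem~\ref{theorem:2} by having the simulator substitute dummy inputs for honest parties, exactly mirroring how Theorem~\ref{theorem:4} extends Theorem~\ref{theorem:1}. The paper's own proof is in fact a one-paragraph sketch that states only this reduction, whereas you additionally spell out the handling of signatures, the verification tuple, and the adversary's dynamic choices via $M_c$---all consistent with the paper's framework but more detailed than what the paper itself provides.
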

\begin{proof}
        We present the detailed security proof in the Appendix \ref{appendix:dishonest-intermediate-servers-only}.
\end{proof}

\begin{theorem}{(Security Against Dishonest Users, Intermediate Servers and Aggregator)}\label{theorem:6}
 For all $\mathbb{U}, \mathbb{F}, \mathsf{t_c, t_f}, \lambda$ with $|C_c|\leq \mathsf{t_c}, |C_f|\leq \mathsf{t_f}, x_{\mathbb{U}\setminus (C_c \bigcup C_f)}, \mathbb{U}$ such that $C_c\subseteq \mathbb{U}, C_f\subseteq \mathbb{F}$, $\delta_c= t_c- |C_c \cap \mathbb{U}|, \delta_f= t_f- |C_f \cap \mathbb{F}|$, there exists a probabilistic-time (PPT) simulator $\mathsf{Sim_{C}^{\mathbb{U}, t_c, t_f, \lambda}}$ which output is perfectly indistinguishable from the output of $\mathsf{Real_{C}^{\mathbb{U}, t_c, t_f, \lambda}}$:
    \begin{align*}
        &\mathsf{Sim_{C}^{\mathbb{U}, \mathbb{F}, t_c, t_f, \lambda}}(M_c, \{x_{\mathbb{U}\setminus (C_c\bigcup C_f)}\bigcup x_{\mathbb{F}\setminus C_f}\})\equiv \\
        &\mathsf{Real_{C}^{\mathbb{U}, \mathbb{F}, t_c, t_f, \lambda}}(M_c)
    \end{align*}
    where $\delta_c, \delta_f$ are the lower bound of the number of participating honest users and intermediate servers respectively.
\end{theorem}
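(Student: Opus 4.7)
The plan is to construct a PPT simulator $\mathsf{Sim}$ that produces a joint view for the corrupted parties $C_c \cup C_f \cup \{\mathrm{Agg}\}$ indistinguishable from the real execution, given only oracle access to the aggregate of honest inputs in each round. I would combine the simulation strategies used in Theorems~\ref{theorem:4} and~\ref{theorem:5} and proceed by a hybrid argument over training rounds, leaning on the one-time-pad nature of the key-negation masks together with the IND-CPA security of the symmetric homomorphic encryption recalled in Section~\ref{sec:symmetricKeyHomoEncrypt}.

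First, I would have $\mathsf{Sim}$ internally execute the protocol on behalf of every honest user and every honest intermediate server using dummy inputs $x' \in \mathbb{Z}_q$ in place of the honest users' true local updates. For each honest user, $\mathsf{Sim}$ generates fresh random masks $k_j \in \mathbb{Z}_q$ for every entity in $\mathcal{N}$ and produces the masked contributions $c_t^{i,j} = x'/n + k_j - k_{j-1} \mod q$. Because each $k_j$ is a uniform one-time pad and is never reused across rounds, each simulated $c_t^{i,j}$ delivered to a corrupted destination is information-theoretically indistinguishable from its real-world counterpart. Signatures on the simulated messages are produced under the keys that $\mathsf{Sim}$ assigned to honest parties during the simulated setup, so all $\mathsf{SIG.Ver}$ checks performed by corrupted parties proceed identically in both worlds.

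Next I would handle the adaptive dropout and subset choice. At each round, driven by the next-message function $M_c$, the corrupted aggregator commits to some participating set $\mathrm{I}$. $\mathsf{Sim}$ then queries the Random Oracle on $\mathrm{I} \setminus C_c$ to obtain $\sigma_t = \sum_{u_i \in \mathrm{I}\setminus C_c} x_{u_i}$. Since the bounds $\delta_c \geq 2$ and $\delta_f \geq 1$ guarantee at least two honest users in $\mathrm{I}$ and at least one honest intermediate server, $\mathsf{Sim}$ can adjust the partial aggregate produced by that honest intermediate server so that the unmasking computed by the adversary yields exactly $\sigma_t$; the telescoping structure of the key negation guarantees consistent cancellation across honest users' contributions regardless of which subset $\mathrm{I}$ was adaptively chosen. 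For the verification phase, $\mathsf{Sim}$ follows the protocol honestly on behalf of honest intermediate servers and relays whatever $\langle R, S \rangle$ the corrupted aggregator produces; the UF-CMA security of the signatures and the unforgeability of $\mathtt{MAC}$ ensure that any inconsistency in $S$ across intermediate servers is caught by an honest user in $\mathrm{I}$ in both the real and simulated worlds with the same probability.

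The hard part will be the hybrid argument establishing round-by-round indistinguishability under adaptive corruption. In each hybrid I would replace one honest user's real masked contributions with simulated ones and argue that no PPT distinguisher can separate consecutive hybrids; this reduction uses the IND-CPA security of the Castelluccia et al.\ scheme together with the fact that $\delta_f \geq 1$ guarantees at least one mask coordinate $k_j$ is never seen by the adversary (being delivered only to an honest intermediate server that $\mathsf{Sim}$ controls). The subtlety is that $M_c$ may correlate its future corruption and dropout decisions with messages already observed, so the simulator must remain consistent with every adaptive trajectory; this is resolved by programming the Random Oracle lazily and by deferring the choice of the honest intermediate server's partial aggregate until after $\mathrm{I}$ is committed. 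Summing the hybrid gaps over polynomially many rounds yields the claimed indistinguishability.
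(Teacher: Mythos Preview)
Your proposal is correct and follows essentially the same approach as the paper: a hybrid argument that relies on the one-time-pad/IND-CPA security of the key-negation masking, the UF-CMA signature scheme, and a Random-Oracle query for the honest users' aggregate, all hinging on the assumption that at least two users and one intermediate server remain honest. Your treatment is, if anything, more explicit than the paper's about the adaptive subtleties (lazy oracle programming, deferring the honest intermediate server's partial aggregate until $\mathrm{I}$ is fixed), but the core strategy coincides.
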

\begin{proof}
    We present the detailed security proof in the Appendix \ref{proof:malicious}. 
\end{proof}

\section{Experimental Evaluation}
\label{sec:performance_eval}
In this section, we present experimental results to demonstrate the practical efficiency of our proposed protocol. A theoretical comparison of functionality and security is provided in Table~\ref{table:functionality-comparison}, while the complexity analysis of communication and computation costs is shown in Table~\ref{table:fun-com-commu-comparison}. These aspects were briefly discussed in Section~\ref{sec:related-work}. Overall, the theoretical findings indicate that our protocol offers enhanced functionality and stronger security guarantees, while maintaining communication and computation costs comparable to e-SeaFL, which is the most efficient single-setup-based secure aggregation protocol to date.
\par 
To further assess practicality, we conduct experiments comparing our protocol with e-SeaFL, focusing on concrete running times for the user, intermediate server, and aggregator. This evaluation highlights the computational efficiency of our protocol. Additionally, we report the accuracy achieved on the MNIST \cite{Lecun1998} and CIFAR-10 \cite{krizhevsky2009learning} datasets, comparing our results with those obtained using the standard FedAvg method \cite{mcmahan2017communication} (a baseline FL approach). 


We developed a prototype of our protocol and tested it on a VM with 2 vCPUs (Intel Xeon), 16GB of RAM, and Ubuntu Server 22.04. The prototype is implemented in Python (Python 3.7.11) using PyTorch 1.13.1, NumPy 1.21.6, and PyCryptodome 3.20. Our prototype employs the LeNet-5 architecture as an example to demonstrate both the learning performance and practical applicability of our protocol.
\par 
For performance testing, we omitted the actual model training process and instead generated a random data vector of the required size. This allowed us to obtain performance results comparable to those of e-SeaFL and to observe the scalability trend as the weight size increased. The implementation used for performance testing of our protocol is available at \url{https://anonymous.4open.science/r/scatesfl-815A/}. The real-world federated model training (using MNIST or CIFAR-10) based on our protocol can be found at \url{https://anonymous.4open.science/r/scatesfl-44E1/}. The corresponding e-SeaFL implementation is accessible at \url{https://anonymous.4open.science/r/scatesfl-6F59/}.
\begin{figure*}[htbp]
     \centering
     \begin{subfigure}[htbp]{0.4\textwidth}
         \centering
         \includegraphics[width=6.5cm, height=4.5cm]{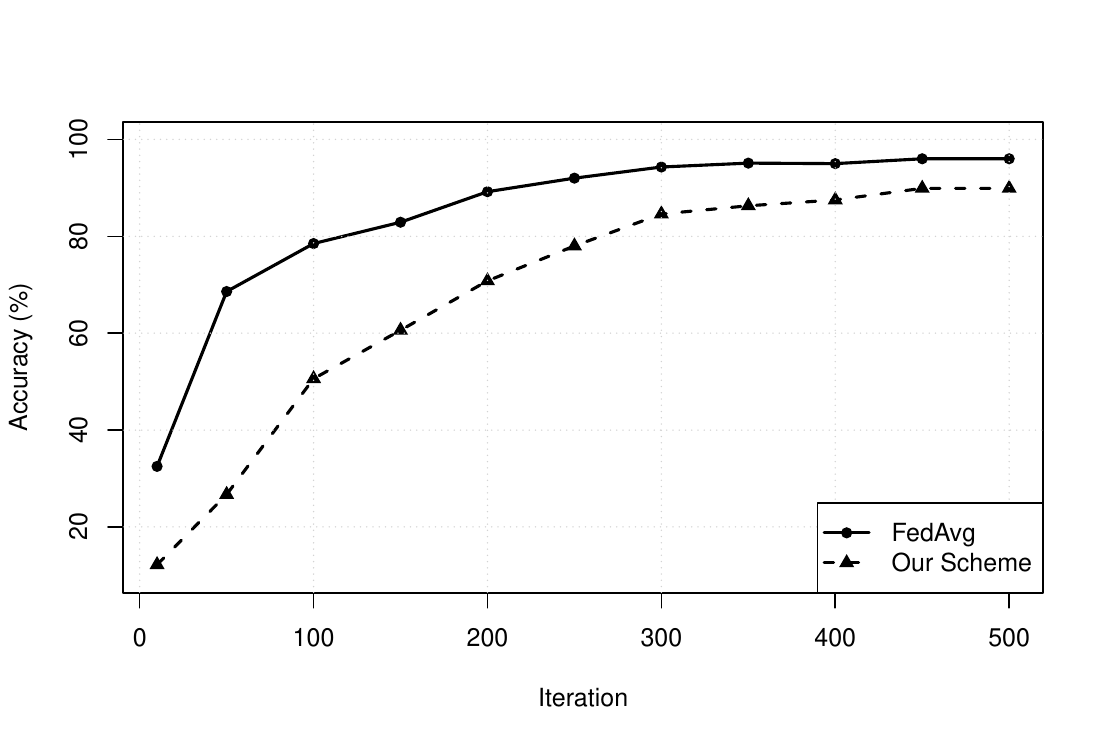}
         \caption{MNIST}
         \label{fig:MNIST-accuracy}
     \end{subfigure}
     ~
     \begin{subfigure}[htbp]{0.4\textwidth}
         \centering
         \includegraphics[width=6.5cm, height=4.5cm]{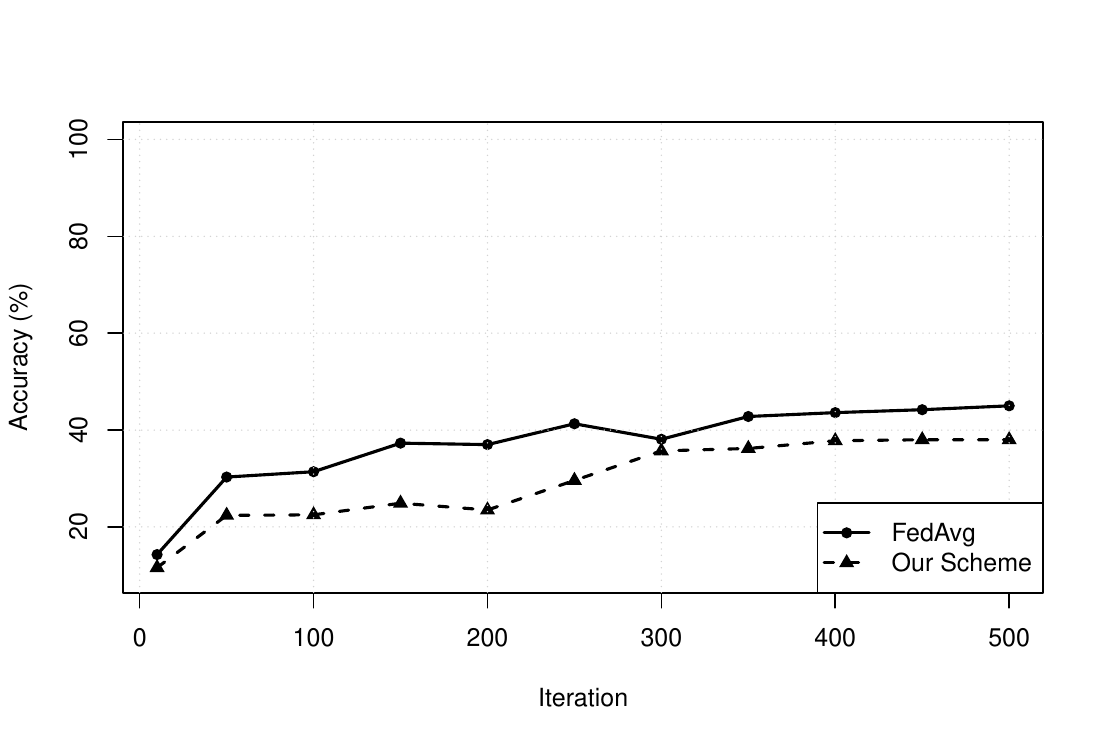}
         \caption{CIFAR-10}
         \label{fig:CIFAR-accuracy}
     \end{subfigure}
     \vspace{-.3cm}
        \caption{Accuracy Comparison Between Our Protocol and Baseline FedAvg \cite{mcmahan2017communication}}
        \vspace{-.2cm}
        \label{fig:accuracy}
\end{figure*}
\vspace{-.3cm}
\subsection{User Running Time Computation}
Table~\ref{table:running-time-semi-honest-model} presents the average per-user computation time across different phases- masking, model verification, and total running time- using a vector size of $48k$ with 64-bit elements and 5 intermediate servers per round. These values are averaged over five rounds and remain nearly constant across varying numbers of users, as the computation cost is primarily influenced by the input vector size. 
\par 
From Table~\ref{table:running-time-semi-honest-model}, it is evident that our protocol reduces masking time by about $98\%$ in the semi-honest model and $97\%$ in the malicious model, and reduces model verification time by about $99\%$ in both models, compared to e-SeaFL. As discussed earlier, e-SeaFL incurs higher computation due to additional PRF operations in the masking phase and elliptic curve cryptographic operations during model verification. Moreover, the reported masking time for e-SeaFL in our comparison excludes the computations related to the commitment phase (please refer e-SeaFL \cite{Behnia2024} for more details on the commitment phase); when these are included, the masking time increases to approximately $1.6$ minutes. Even without considering the commitment phase, our protocol demonstrates clear computational advantages, and when it is included, the performance gap becomes even more pronounced--- underscoring our protocol’s efficiency in both phases.
\par 
For the average total per-user running time (including the commitment cost for e-SeaFL), e-SeaFL requires approximately 194 million~$\mu$s per user, whereas our protocol requires only 6,140~$\mu$s, yielding an impressive $99\%$ reduction in the semi-honest model. In the malicious model, e-SeaFL averages 195 million~$\mu$s, compared to 8,944~$\mu$s for our protocol, resulting in a $99\%$ reduction. Overall, these results highlight the practicality and scalability of our protocol, especially for resource-constrained devices such as IoT nodes in federated learning environments.

\subsection{Intermediate Server and Aggregator Running Time Computation}
Figures~\ref{fig:Semi-honest-Intermediate-Server-Time-with-Users} and~\ref{fig:Semi-honest-Server-Time-With-Users} illustrate the computation times of the intermediate server and aggregator, respectively, under the semi-honest model, comparing our protocol with e-SeaFL across vector sizes of 16K, 32K, and 48K, and user counts up to 500. Experimental results show that our protocol reduces intermediate server computation time by approximately 600 to 2100 times, and aggregator computation time by approximately 300 to 580 times, compared to e-SeaFL. Figures~\ref{fig:Malicious-Intermediate-Server-Time-with-Users} and~\ref{fig:Malicious-Server-Time-with-Users} present results under the malicious model, where additional signature operations are required. Our experiments show that, compared to e-SeaFL, our protocol reduces intermediate server computation time by approximately $60$ to $165$ times, and aggregator computation time by approximately $220$ to $510$ times.
\par 
We further evaluated performance by varying the vector size (16K to 64K) for 100, 200, and 500 users. Figures~\ref{fig:Semi-honest-Server-Time-With-Vector-Size} and~\ref{fig:Malicious-Server-Running-Time-with-Vector-Size} show the average computation time of the aggregator in the semi-honest and malicious models, respectively. Similarly, Figures~\ref{fig:Semi-honest-Intermediate-Server-Time-With-Vector-Size} and~\ref{fig:Malicious-Intermediate-Server-Running-Time-with-Vector-Siz} show the average computation time of the intermediate server in the semi-honest and malicious models, respectively. From these figures, it is evident that our protocol reduces aggregator computation time by approximately $250$ to $560$ times in the semi-honest model and $220$ to $510$ times in the malicious model compared to e-SeaFL. Similarly, it reduces intermediate server computation time by approximately $350$ to $1900$ times in the semi-honest model and $47$ to $154$ times in the malicious model. 

\subsubsection{Accuracy Comparison}
We conducted experiments on two benchmark datasets, CIFAR-10 \cite{krizhevsky2009learning} and MNIST \cite{Lecun1998}, to assess the learning performance. CIFAR-10 has 60,000 color images of size $32 \times 32$ pixels, divided into 10 different classes. It has 50,000 training images and 10,000 test images. This dataset is used to test models on more challenging image tasks. MNIST has 70,000 grayscale images of handwritten digits, each $28 \times 28$ pixels. It includes 60,000 training images and 10,000 test images. MNIST is commonly used to check how well models learn simple image recognition tasks quickly.
\par 
Figures~\ref{fig:MNIST-accuracy} and \ref{fig:CIFAR-accuracy} present the accuracy of our protocol during training compared to the baseline FedAvg algorithm \cite{mcmahan2017communication}, over 500 rounds with 100 users. While FedAvg achieves slightly higher final accuracy, our protocol demonstrates steady and reliable learning throughout the process. On CIFAR-10, our protocol reaches 38\% accuracy versus FedAvg’s 45\%, and on MNIST it attains 89.9\%, which is close to FedAvg’s 96\%. The small drop in accuracy comes from encoding floating-point values as integers, where a minor loss of precision accumulates across rounds. This trade-off can be mitigated through techniques such as fixed-point encoding\cite{secureml,crypten} or quantization-aware training (QAT) \cite{jacob2018quantization,gholami2021survey}, which preserve numerical precision while remaining compatible with secure aggregation. Despite this, our protocol maintains strong accuracy while providing additional benefits in privacy, communication efficiency, and reliability, making it a practical and effective choice in settings with system constraints or strict security requirements.
\vspace{-.2cm}
\section{Conclusion}
\label{sec:conclusion}
In this paper, we presented a novel secure aggregation protocol tailored for dynamic FL environments. Our design addresses key limitations of existing protocols by introducing a masking protocol based on additive symmetric homomorphic encryption with key negation. This approach eliminates the need for repeated cryptographic setup, significantly reducing communication and computation overhead. Our protocol is the first single-setup protocol to simultaneously achieve forward and backward secrecy, model integrity verification, and dynamic user participation--- all while maintaining strong privacy guarantees and resilience to user dropouts. By incorporating intermediate servers, our protocol avoids user-to-user communication, reduces trust assumptions, and enhances scalability for real-world deployments. We also provided formal security proofs under both semi-honest and malicious adversarial models, demonstrating robustness. Our experimental evaluation shows that our protocol achieves up to $99\%$ reduction in user-side computation time compared to state-of-the-art protocols like e-SeaFL, while maintaining competitive model accuracy on benchmark datasets. Overall, our protocol offers a practical, efficient, and secure solution for FL in dynamic and resource-constrained environments, paving the way for broader adoption of privacy-preserving ML.
\par 
In our protocol, we assume that the intermediate servers remain online throughout the training process. In certain deployment scenarios, such as ad-hoc or intermittently connected edge networks, this assumption might not hold. Designing a protocol that supports dynamic intermediate servers while maintaining the same functionality, security, and efficiency is an important direction for future work. Additionally, further evaluation on real-world IoT testbeds is planned to validate the protocol’s performance and scalability under practical conditions, with a particular focus on communication overhead.

%
\section*{Acknowledgment}
The work has been supported by the Cyber Security Research Centre Limited whose activities are partially funded by the Australian
Government’s Cooperative Research Centres Program. 

\bibliographystyle{ACM-Reference-Format}

\begin{thebibliography}{48}


\ifx \showCODEN    \undefined \def \showCODEN     #1{\unskip}     \fi
\ifx \showISBNx    \undefined \def \showISBNx     #1{\unskip}     \fi
\ifx \showISBNxiii \undefined \def \showISBNxiii  #1{\unskip}     \fi
\ifx \showISSN     \undefined \def \showISSN      #1{\unskip}     \fi
\ifx \showLCCN     \undefined \def \showLCCN      #1{\unskip}     \fi
\ifx \shownote     \undefined \def \shownote      #1{#1}          \fi
\ifx \showarticletitle \undefined \def \showarticletitle #1{#1}   \fi
\ifx \showURL      \undefined \def \showURL       {\relax}        \fi
\providecommand\bibfield[2]{#2}
\providecommand\bibinfo[2]{#2}
\providecommand\natexlab[1]{#1}
\providecommand\showeprint[2][]{arXiv:#2}

\bibitem[Behnia et~al\mbox{.}(2024)]%
        {Behnia2024}
\bibfield{author}{\bibinfo{person}{R. Behnia}, \bibinfo{person}{A. Riasi}, \bibinfo{person}{R. Ebrahimi}, \bibinfo{person}{S.~S.~M. Chow}, \bibinfo{person}{B. Padmanabhan}, {and} \bibinfo{person}{T. Hoang}.} \bibinfo{year}{2024}\natexlab{}.
\newblock \showarticletitle{{Efficient Secure Aggregation for Privacy-Preserving Federated Machine Learning}}. In \bibinfo{booktitle}{\emph{Proceedings of the 2024 Annual Computer Security Applications Conference}} \emph{(\bibinfo{series}{ACSAC'24})}.
\newblock


\bibitem[Bell et~al\mbox{.}(2020)]%
        {Bell2020}
\bibfield{author}{\bibinfo{person}{J.~H. Bell}, \bibinfo{person}{K.~A. Bonawitz}, \bibinfo{person}{A. Gasc\'{o}n}, \bibinfo{person}{T. Lepoint}, {and} \bibinfo{person}{M. Raykova}.} \bibinfo{year}{2020}\natexlab{}.
\newblock \showarticletitle{{Secure Single-Server Aggregation with (Poly)Logarithmic Overhead}}. In \bibinfo{booktitle}{\emph{Proceedings of the 2020 ACM SIGSAC Conference on Computer and Communications Security}} (Virtual Event, USA) \emph{(\bibinfo{series}{CCS '20})}. \bibinfo{publisher}{Association for Computing Machinery}, \bibinfo{address}{New York, NY, USA}, \bibinfo{pages}{1253–1269}.
\newblock
\showISBNx{9781450370899}
\href{https://doi.org/10.1145/3372297.3417885}{doi:\nolinkurl{10.1145/3372297.3417885}}


\bibitem[Castelluccia et~al\mbox{.}(2009)]%
        {Castelluccia2009}
\bibfield{author}{\bibinfo{person}{C. Castelluccia}, \bibinfo{person}{A.~C-F. Chan}, \bibinfo{person}{E. Mykletun}, {and} \bibinfo{person}{G. Tsudik}.} \bibinfo{year}{2009}\natexlab{}.
\newblock \showarticletitle{{Efficient and provably secure aggregation of encrypted data in wireless sensor networks}}.
\newblock \bibinfo{journal}{\emph{ACM Trans. Sen. Netw.}} \bibinfo{volume}{5}, \bibinfo{number}{3}, Article \bibinfo{articleno}{20} (\bibinfo{date}{Jun} \bibinfo{year}{2009}), \bibinfo{numpages}{36}~pages.
\newblock
\showISSN{1550-4859}
\href{https://doi.org/10.1145/1525856.1525858}{doi:\nolinkurl{10.1145/1525856.1525858}}


\bibitem[Chan et~al\mbox{.}(2012)]%
        {Chan2012}
\bibfield{author}{\bibinfo{person}{T.~H.~H. Chan}, \bibinfo{person}{E. Shi}, {and} \bibinfo{person}{D. Song}.} \bibinfo{year}{2012}\natexlab{}.
\newblock \showarticletitle{"Privacy-Preserving Stream Aggregation with Fault Tolerance"}. In \bibinfo{booktitle}{\emph{Financial Cryptography and Data Security}}. \bibinfo{pages}{200--214}.
\newblock


\bibitem[Chen~\emph{et al.}(2024)]%
        {Chen2024}
\bibfield{author}{\bibinfo{person}{J. Chen~\emph{et al.}}} \bibinfo{year}{2024}\natexlab{}.
\newblock \showarticletitle{{When Federated Learning Meets Privacy-Preserving Computation}}.
\newblock \bibinfo{journal}{\emph{ACM Comput. Surv.}} (\bibinfo{date}{July} \bibinfo{year}{2024}).
\newblock
\href{https://doi.org/10.1145/3679013}{doi:\nolinkurl{10.1145/3679013}}


\bibitem[Eltaras et~al\mbox{.}(2023)]%
        {Eltaras2023}
\bibfield{author}{\bibinfo{person}{T. Eltaras}, \bibinfo{person}{F. Sabry}, \bibinfo{person}{W. Labda}, \bibinfo{person}{K. Alzoubi}, {and} \bibinfo{person}{Q. Ahmedeltaras}.} \bibinfo{year}{2023}\natexlab{}.
\newblock \showarticletitle{{Efficient Verifiable Protocol for Privacy-Preserving Aggregation in Federated Learning}}.
\newblock \bibinfo{journal}{\emph{IEEE Transactions on Information Forensics and Security}}  \bibinfo{volume}{18} (\bibinfo{year}{2023}), \bibinfo{pages}{2977--2990}.
\newblock
\href{https://doi.org/10.1109/TIFS.2023.3273914}{doi:\nolinkurl{10.1109/TIFS.2023.3273914}}


\bibitem[\emph{et al.}(2021a)]%
        {Fereidooni2021}
\bibfield{author}{\bibinfo{person}{H.~Fereidooni \emph{et al.}}} \bibinfo{year}{2021}\natexlab{a}.
\newblock \showarticletitle{{SAFELearn: Secure Aggregation for private FEderated Learning}}. In \bibinfo{booktitle}{\emph{2021 IEEE Security and Privacy Workshops (SPW)}}. \bibinfo{pages}{56--62}.
\newblock
\href{https://doi.org/10.1109/SPW53761.2021.00017}{doi:\nolinkurl{10.1109/SPW53761.2021.00017}}


\bibitem[\emph{et al.}(2017)]%
        {Bonawitz2017}
\bibfield{author}{\bibinfo{person}{K.~Bonawitz \emph{et al.}}} \bibinfo{year}{2017}\natexlab{}.
\newblock \showarticletitle{Practical Secure Aggregation for Privacy-Preserving Machine Learning}. In \bibinfo{booktitle}{\emph{Proceedings of the 2017 ACM SIGSAC Conference on Computer and Communications Security}} (Dallas, Texas, USA) \emph{(\bibinfo{series}{CCS '17})}. \bibinfo{publisher}{Association for Computing Machinery}, \bibinfo{address}{New York, NY, USA}, \bibinfo{pages}{1175–1191}.
\newblock
\showISBNx{9781450349468}
\href{https://doi.org/10.1145/3133956.3133982}{doi:\nolinkurl{10.1145/3133956.3133982}}


\bibitem[\emph{et al.}(2021b)]%
        {Sav2021}
\bibfield{author}{\bibinfo{person}{S.~Sav \emph{et al.}}} \bibinfo{year}{2021}\natexlab{b}.
\newblock \showarticletitle{{POSEIDON: Privacy-Preserving Federated Neural Network Learning}}. In \bibinfo{booktitle}{\emph{NDSS}}.
\newblock


\bibitem[\emph{et al.}(2024)]%
        {Liu2024}
\bibfield{author}{\bibinfo{person}{Z.~Liu \emph{et al.}}} \bibinfo{year}{2024}\natexlab{}.
\newblock \showarticletitle{{Dynamic User Clustering for Efficient and Privacy-Preserving Federated Learning}}.
\newblock \bibinfo{journal}{\emph{IEEE Transactions on Dependable and Secure Computing}} \bibinfo{number}{01} (\bibinfo{date}{Jan.} \bibinfo{year}{2024}), \bibinfo{pages}{1--12}.
\newblock
\showISSN{1941-0018}


\bibitem[Fazli~Khojir et~al\mbox{.}(2023)]%
        {Fazli2023}
\bibfield{author}{\bibinfo{person}{H. Fazli~Khojir}, \bibinfo{person}{D. Alhadidi}, \bibinfo{person}{S. Rouhani}, {and} \bibinfo{person}{N. Mohammed}.} \bibinfo{year}{2023}\natexlab{}.
\newblock \showarticletitle{{FedShare: Secure Aggregation based on Additive Secret Sharing in Federated Learning}}. In \bibinfo{booktitle}{\emph{Proceedings of the 27th International Database Engineered Applications Symposium}} (Heraklion, Crete, Greece) \emph{(\bibinfo{series}{IDEAS '23})}. \bibinfo{publisher}{Association for Computing Machinery}, \bibinfo{address}{New York, NY, USA}, \bibinfo{pages}{25–33}.
\newblock
\showISBNx{9798400707445}
\href{https://doi.org/10.1145/3589462.3589504}{doi:\nolinkurl{10.1145/3589462.3589504}}


\bibitem[Fu~\emph{et al.}(2024)]%
        {Fu2024}
\bibfield{author}{\bibinfo{person}{X. Fu~\emph{et al.}}} \bibinfo{year}{2024}\natexlab{}.
\newblock \showarticletitle{{Blockchain-Based Efficiently Privacy-Preserving Federated Learning Framework Using Shamir Secret Sharing}}.
\newblock \bibinfo{journal}{\emph{IEEE Transactions on Consumer Electronics}} (\bibinfo{year}{2024}), \bibinfo{pages}{1--1}.
\newblock


\bibitem[Geiping et~al\mbox{.}(2020)]%
        {Geiping2020}
\bibfield{author}{\bibinfo{person}{J. Geiping}, \bibinfo{person}{H. Bauermeister}, \bibinfo{person}{H. Dr\"{o}ge}, {and} \bibinfo{person}{M. Moeller}.} \bibinfo{year}{2020}\natexlab{}.
\newblock \showarticletitle{{Inverting gradients - how easy is it to break privacy in federated learning?}}. In \bibinfo{booktitle}{\emph{Proceedings of the 34th International Conference on Neural Information Processing Systems}} \emph{(\bibinfo{series}{NIPS '20})}. Article \bibinfo{articleno}{1421}, \bibinfo{numpages}{11}~pages.
\newblock


\bibitem[Gholami et~al\mbox{.}(2021)]%
        {gholami2021survey}
\bibfield{author}{\bibinfo{person}{Amir Gholami}, \bibinfo{person}{Sehoon Kim}, \bibinfo{person}{Zhen Dong}, \bibinfo{person}{Zhewei Yao}, \bibinfo{person}{Michael~W. Mahoney}, {and} \bibinfo{person}{Kurt Keutzer}.} \bibinfo{year}{2021}\natexlab{}.
\newblock \showarticletitle{A survey of quantization methods for efficient neural network inference}.
\newblock \bibinfo{journal}{\emph{arXiv preprint arXiv:2103.13630}} (\bibinfo{year}{2021}).
\newblock


\bibitem[Guo et~al\mbox{.}(2024)]%
        {Guo2024}
\bibfield{author}{\bibinfo{person}{Y. Guo}, \bibinfo{person}{A. Polychroniadou}, \bibinfo{person}{E. Shi}, \bibinfo{person}{D. Byrd}, {and} \bibinfo{person}{T. Balch}.} \bibinfo{year}{2024}\natexlab{}.
\newblock \showarticletitle{MicroSecAgg: Streamlined single-server secure aggregation}. In \bibinfo{booktitle}{\emph{Privacy Enhancing Technologies}} \emph{(\bibinfo{series}{PETS'24}, Vol.~\bibinfo{volume}{2024})}. \bibinfo{pages}{246--275}.
\newblock


\bibitem[Guo~\emph{et al.}(2021)]%
        {Guo2021}
\bibfield{author}{\bibinfo{person}{X. Guo~\emph{et al.}}} \bibinfo{year}{2021}\natexlab{}.
\newblock \showarticletitle{VeriFL: Communication-Efficient and Fast Verifiable Aggregation for Federated Learning}.
\newblock \bibinfo{journal}{\emph{IEEE Transactions on Information Forensics and Security}}  \bibinfo{volume}{16} (\bibinfo{year}{2021}), \bibinfo{pages}{1736--1751}.
\newblock


\bibitem[Jacob et~al\mbox{.}(2018)]%
        {jacob2018quantization}
\bibfield{author}{\bibinfo{person}{Benoit Jacob}, \bibinfo{person}{Skirmantas Kligys}, \bibinfo{person}{Bo Chen}, \bibinfo{person}{Menglong Zhu}, \bibinfo{person}{Matthew Tang}, \bibinfo{person}{Andrew Howard}, \bibinfo{person}{Hartwig Adam}, {and} \bibinfo{person}{Dmitry Kalenichenko}.} \bibinfo{year}{2018}\natexlab{}.
\newblock \showarticletitle{Quantization and training of neural networks for efficient integer-arithmetic-only inference}. In \bibinfo{booktitle}{\emph{CVPR}}. \bibinfo{pages}{2704--2713}.
\newblock


\bibitem[Jeon et~al\mbox{.}(2024)]%
        {Jeon2024}
\bibfield{author}{\bibinfo{person}{J. Jeon}, \bibinfo{person}{J. Kim}, \bibinfo{person}{K. Lee}, \bibinfo{person}{S. Oh}, {and} \bibinfo{person}{J. Ok}.} \bibinfo{year}{2024}\natexlab{}.
\newblock \showarticletitle{{Gradient inversion with generative image prior}}. In \bibinfo{booktitle}{\emph{Proceedings of the 35th International Conference on Neural Information Processing Systems}} \emph{(\bibinfo{series}{NIPS '21})}. Article \bibinfo{articleno}{2288}, \bibinfo{numpages}{11}~pages.
\newblock


\bibitem[Kadhe et~al\mbox{.}(2020)]%
        {Kadhe2020}
\bibfield{author}{\bibinfo{person}{S. Kadhe}, \bibinfo{person}{N. Rajaraman}, \bibinfo{person}{O.~O. Koyluoglu}, {and} \bibinfo{person}{K. Ramchandran}.} \bibinfo{year}{2020}\natexlab{}.
\newblock \bibinfo{title}{{FastSecAgg: Scalable Secure Aggregation for Privacy-Preserving Federated Learning}}.
\newblock
\showeprint[arxiv]{2009.11248}~[cs.CR]
\urldef\tempurl%
\url{https://arxiv.org/abs/2009.11248}
\showURL{%
\tempurl}


\bibitem[Kairouz et~al\mbox{.}(2021)]%
        {kairouz2022}
\bibfield{author}{\bibinfo{person}{P. Kairouz}, \bibinfo{person}{Z. Liu}, {and} \bibinfo{person}{T. Steinke}.} \bibinfo{year}{2021}\natexlab{}.
\newblock \showarticletitle{The Distributed Discrete Gaussian Mechanism for Federated Learning with Secure Aggregation}. In \bibinfo{booktitle}{\emph{Proceedings of the 2021 International Conference on Machine Learning}} \emph{(\bibinfo{series}{ICML'21})}.
\newblock


\bibitem[Knott et~al\mbox{.}(2021)]%
        {crypten}
\bibfield{author}{\bibinfo{person}{Brian Knott}, \bibinfo{person}{Mayank Varia}, \bibinfo{person}{Sameer Wagh}, {and} \bibinfo{person}{Shijie Yao}.} \bibinfo{year}{2021}\natexlab{}.
\newblock \showarticletitle{CrypTen: Secure multi-party computation meets machine learning}. In \bibinfo{booktitle}{\emph{NeurIPS}}.
\newblock


\bibitem[Krizhevsky et~al\mbox{.}(2009)]%
        {krizhevsky2009learning}
\bibfield{author}{\bibinfo{person}{A. Krizhevsky}, \bibinfo{person}{G. Hinton}, {and} \bibinfo{person}{V. Nair}.} \bibinfo{year}{2009}\natexlab{}.
\newblock \bibinfo{booktitle}{\emph{Learning multiple layers of features from tiny images}}.
\newblock \bibinfo{type}{Tech. Report}. \bibinfo{institution}{University of Toronto}, \bibinfo{address}{Toronto, ON, Canada}.
\newblock


\bibitem[Lecun et~al\mbox{.}(1998)]%
        {Lecun1998}
\bibfield{author}{\bibinfo{person}{Y. Lecun}, \bibinfo{person}{L. Bottou}, \bibinfo{person}{Y. Bengio}, {and} \bibinfo{person}{P. Haffner}.} \bibinfo{year}{1998}\natexlab{}.
\newblock \showarticletitle{Gradient-based learning applied to document recognition}.
\newblock \bibinfo{journal}{\emph{Proc. IEEE}} \bibinfo{volume}{86}, \bibinfo{number}{11} (\bibinfo{year}{1998}), \bibinfo{pages}{2278--2324}.
\newblock


\bibitem[Liu et~al\mbox{.}(2023)]%
        {Liu2023}
\bibfield{author}{\bibinfo{person}{Z. Liu}, \bibinfo{person}{J. Guo}, \bibinfo{person}{K.~Y. Lam}, {and} \bibinfo{person}{J. Zhao}.} \bibinfo{year}{2023}\natexlab{}.
\newblock \showarticletitle{{Efficient Dropout-Resilient Aggregation for Privacy-Preserving Machine Learning}}.
\newblock \bibinfo{journal}{\emph{IEEE Transactions on Information Forensics and Security}}  \bibinfo{volume}{18} (\bibinfo{year}{2023}), \bibinfo{pages}{1839--1854}.
\newblock
\href{https://doi.org/10.1109/TIFS.2022.3163592}{doi:\nolinkurl{10.1109/TIFS.2022.3163592}}


\bibitem[Liu~\emph{et al.}(2022)]%
        {Liu2022}
\bibfield{author}{\bibinfo{person}{Z. Liu~\emph{et al.}}} \bibinfo{year}{2022}\natexlab{}.
\newblock \showarticletitle{Privacy-Preserving Aggregation in Federated Learning: A Survey}.
\newblock \bibinfo{journal}{\emph{IEEE Transactions on Big Data}} (\bibinfo{year}{2022}), \bibinfo{pages}{1--20}.
\newblock


\bibitem[Ma et~al\mbox{.}(2023)]%
        {Ma2023}
\bibfield{author}{\bibinfo{person}{Y. Ma}, \bibinfo{person}{J. Woods}, \bibinfo{person}{S. Angel}, \bibinfo{person}{A. Polychroniadou}, {and} \bibinfo{person}{T. Rabin}.} \bibinfo{year}{2023}\natexlab{}.
\newblock \showarticletitle{{Flamingo: Multi-Round Single-Server Secure Aggregation with Applications to Private Federated Learning}}. In \bibinfo{booktitle}{\emph{2023 IEEE Symposium on Security and Privacy (SP)}}. \bibinfo{pages}{477--496}.
\newblock
\href{https://doi.org/10.1109/SP46215.2023.10179434}{doi:\nolinkurl{10.1109/SP46215.2023.10179434}}


\bibitem[Ma~\emph{et al.}(2022)]%
        {Ma2022}
\bibfield{author}{\bibinfo{person}{Z. Ma~\emph{et al.}}} \bibinfo{year}{2022}\natexlab{}.
\newblock \showarticletitle{{ShieldFL: Mitigating Model Poisoning Attacks in Privacy-Preserving Federated Learning}}.
\newblock \bibinfo{journal}{\emph{IEEE Transactions on Information Forensics and Security}}  \bibinfo{volume}{17} (\bibinfo{year}{2022}), \bibinfo{pages}{1639--1654}.
\newblock


\bibitem[McMahan et~al\mbox{.}(2017)]%
        {mcmahan2017communication}
\bibfield{author}{\bibinfo{person}{Brendan McMahan}, \bibinfo{person}{Eider Moore}, \bibinfo{person}{Daniel Ramage}, \bibinfo{person}{Seth Hampson}, {and} \bibinfo{person}{Blaise~Aguera y Arcas}.} \bibinfo{year}{2017}\natexlab{}.
\newblock \showarticletitle{Communication-efficient learning of deep networks from decentralized data}. In \bibinfo{booktitle}{\emph{Artificial Intelligence and Statistics (AISTATS)}}. PMLR, \bibinfo{pages}{1273--1282}.
\newblock


\bibitem[Mohassel and Zhang(2017)]%
        {secureml}
\bibfield{author}{\bibinfo{person}{Payman Mohassel} {and} \bibinfo{person}{Yupeng Zhang}.} \bibinfo{year}{2017}\natexlab{}.
\newblock \showarticletitle{SecureML: A system for scalable privacy-preserving machine learning}. In \bibinfo{booktitle}{\emph{IEEE Symposium on Security and Privacy (S\&P)}}. \bibinfo{pages}{19--38}.
\newblock


\bibitem[Pasquini et~al\mbox{.}(2022)]%
        {Pasquini2022}
\bibfield{author}{\bibinfo{person}{D. Pasquini}, \bibinfo{person}{D. Francati}, {and} \bibinfo{person}{G. Ateniese}.} \bibinfo{year}{2022}\natexlab{}.
\newblock \showarticletitle{Eluding Secure Aggregation in Federated Learning via Model Inconsistency}. In \bibinfo{booktitle}{\emph{Proceedings of the 2022 ACM SIGSAC Conference on Computer and Communications Security}} \emph{(\bibinfo{series}{CCS '22})}. \bibinfo{pages}{2429–2443}.
\newblock


\bibitem[Phong~\emph{et al.}(2018)]%
        {Phong2018}
\bibfield{author}{\bibinfo{person}{L.~T. Phong~\emph{et al.}}} \bibinfo{year}{2018}\natexlab{}.
\newblock \showarticletitle{{Privacy-Preserving Deep Learning via Additively Homomorphic Encryption}}.
\newblock \bibinfo{journal}{\emph{IEEE Transactions on Information Forensics and Security}} \bibinfo{volume}{13}, \bibinfo{number}{5} (\bibinfo{year}{2018}), \bibinfo{pages}{1333--1345}.
\newblock


\bibitem[Ratnayake et~al\mbox{.}(2025)]%
        {RATNAYAKE2025128483}
\bibfield{author}{\bibinfo{person}{H. Ratnayake}, \bibinfo{person}{L. Chen}, {and} \bibinfo{person}{X. Ding}.} \bibinfo{year}{2025}\natexlab{}.
\newblock \showarticletitle{Privacy-preserving federated learning with intermediate-level model sharing}.
\newblock \bibinfo{journal}{\emph{Expert Systems with Applications}}  \bibinfo{volume}{291} (\bibinfo{year}{2025}), \bibinfo{pages}{128483}.
\newblock
\showISSN{0957-4174}
\href{https://doi.org/10.1016/j.eswa.2025.128483}{doi:\nolinkurl{10.1016/j.eswa.2025.128483}}


\bibitem[So et~al\mbox{.}(2021)]%
        {So2021}
\bibfield{author}{\bibinfo{person}{J. So}, \bibinfo{person}{B. Güler}, {and} \bibinfo{person}{A.~S. Avestimehr}.} \bibinfo{year}{2021}\natexlab{}.
\newblock \showarticletitle{Turbo-Aggregate: Breaking the Quadratic Aggregation Barrier in Secure Federated Learning}.
\newblock \bibinfo{journal}{\emph{IEEE Journal on Selected Areas in Information Theory}} \bibinfo{volume}{2}, \bibinfo{number}{1} (\bibinfo{year}{2021}), \bibinfo{pages}{479--489}.
\newblock


\bibitem[So~\emph{et al.}(2022)]%
        {So2022ML}
\bibfield{author}{\bibinfo{person}{J. So~\emph{et al.}}} \bibinfo{year}{2022}\natexlab{}.
\newblock \showarticletitle{{LightSecAgg: a Lightweight and Versatile Design for Secure Aggregation in Federated Learning}}. In \bibinfo{booktitle}{\emph{Proceedings of Machine Learning and Systems}}, Vol.~\bibinfo{volume}{4}. \bibinfo{pages}{694--720}.
\newblock


\bibitem[Wang et~al\mbox{.}(2025)]%
        {Wang2025}
\bibfield{author}{\bibinfo{person}{L. Wang}, \bibinfo{person}{M. Huang}, \bibinfo{person}{Z. Zhang}, \bibinfo{person}{M. Li}, \bibinfo{person}{J. Wang}, {and} \bibinfo{person}{K. Gai}.} \bibinfo{year}{2025}\natexlab{}.
\newblock \showarticletitle{RaSA: Robust and Adaptive Secure Aggregation for Edge-Assisted Hierarchical Federated Learning}.
\newblock \bibinfo{journal}{\emph{IEEE Transactions on Information Forensics and Security}}  \bibinfo{volume}{20} (\bibinfo{year}{2025}), \bibinfo{pages}{4280--4295}.
\newblock
\href{https://doi.org/10.1109/TIFS.2025.3559411}{doi:\nolinkurl{10.1109/TIFS.2025.3559411}}


\bibitem[Wang et~al\mbox{.}(2020)]%
        {Wang2020}
\bibfield{author}{\bibinfo{person}{Y. Wang}, \bibinfo{person}{Y. Tong}, {and} \bibinfo{person}{D. Shi}.} \bibinfo{year}{2020}\natexlab{}.
\newblock \showarticletitle{{Federated Latent Dirichlet Allocation: A Local Differential Privacy Based Framework}}.
\newblock \bibinfo{journal}{\emph{Proceedings of the AAAI Conference on Artificial Intelligence}} \bibinfo{volume}{34}, \bibinfo{number}{04} (\bibinfo{date}{Apr.} \bibinfo{year}{2020}), \bibinfo{pages}{6283--6290}.
\newblock
\href{https://doi.org/10.1609/aaai.v34i04.6096}{doi:\nolinkurl{10.1609/aaai.v34i04.6096}}


\bibitem[Wang et~al\mbox{.}(2023)]%
        {Wang2023}
\bibfield{author}{\bibinfo{person}{Y. Wang}, \bibinfo{person}{A. Zhang}, \bibinfo{person}{S. Wu}, {and} \bibinfo{person}{S. Yu}.} \bibinfo{year}{2023}\natexlab{}.
\newblock \showarticletitle{{VOSA: Verifiable and Oblivious Secure Aggregation for Privacy-Preserving Federated Learning}}.
\newblock \bibinfo{journal}{\emph{IEEE Transactions on Dependable and Secure Computing}} \bibinfo{volume}{20}, \bibinfo{number}{5} (\bibinfo{year}{2023}), \bibinfo{pages}{3601--3616}.
\newblock
\href{https://doi.org/10.1109/TDSC.2022.3226508}{doi:\nolinkurl{10.1109/TDSC.2022.3226508}}


\bibitem[Wang~\emph{et al.}(2018)]%
        {Wang2018}
\bibfield{author}{\bibinfo{person}{S Wang~\emph{et al.}}} \bibinfo{year}{2018}\natexlab{}.
\newblock \showarticletitle{When Edge Meets Learning: Adaptive Control for Resource-Constrained Distributed Machine Learning}. In \bibinfo{booktitle}{\emph{IEEE INFOCOM 2018 - IEEE Conference on Computer Communications}}. \bibinfo{pages}{63--71}.
\newblock


\bibitem[Wei~\emph{et al.}(2020)]%
        {Wei2020}
\bibfield{author}{\bibinfo{person}{K. Wei~\emph{et al.}}} \bibinfo{year}{2020}\natexlab{}.
\newblock \showarticletitle{{Federated Learning With Differential Privacy: Algorithms and Performance Analysis}}.
\newblock \bibinfo{journal}{\emph{IEEE Transactions on Information Forensics and Security}}  \bibinfo{volume}{15} (\bibinfo{year}{2020}), \bibinfo{pages}{3454--3469}.
\newblock


\bibitem[Wu and Zhang(2024)]%
        {Wu2024}
\bibfield{author}{\bibinfo{person}{J. Wu} {and} \bibinfo{person}{W. Zhang}.} \bibinfo{year}{2024}\natexlab{}.
\newblock \showarticletitle{On the Security of Verifiable and Oblivious Secure Aggregation for Privacy-Preserving Federated Learning}.
\newblock \bibinfo{journal}{\emph{IEEE Transactions on Dependable and Secure Computing}} \bibinfo{volume}{21}, \bibinfo{number}{5} (\bibinfo{year}{2024}), \bibinfo{pages}{4324--4326}.
\newblock
\href{https://doi.org/10.1109/TDSC.2024.3352170}{doi:\nolinkurl{10.1109/TDSC.2024.3352170}}


\bibitem[Xu~\emph{et al.}(2020)]%
        {Xu2020}
\bibfield{author}{\bibinfo{person}{G. Xu~\emph{et al.}}} \bibinfo{year}{2020}\natexlab{}.
\newblock \showarticletitle{VerifyNet: Secure and Verifiable Federated Learning}.
\newblock \bibinfo{journal}{\emph{IEEE Transactions on Information Forensics and Security}}  \bibinfo{volume}{15} (\bibinfo{year}{2020}), \bibinfo{pages}{911--926}.
\newblock


\bibitem[Yang et~al\mbox{.}(2023)]%
        {Yang2023}
\bibfield{author}{\bibinfo{person}{S. Yang}, \bibinfo{person}{Y. Chen}, \bibinfo{person}{Z. Yang}, \bibinfo{person}{B. Li}, {and} \bibinfo{person}{H. Liu}.} \bibinfo{year}{2023}\natexlab{}.
\newblock \showarticletitle{{Fast Secure Aggregation With High Dropout Resilience for Federated Learning}}.
\newblock \bibinfo{journal}{\emph{IEEE Transactions on Green Communications and Networking}} \bibinfo{volume}{7}, \bibinfo{number}{3} (\bibinfo{year}{2023}), \bibinfo{pages}{1501--1514}.
\newblock
\href{https://doi.org/10.1109/TGCN.2023.3277251}{doi:\nolinkurl{10.1109/TGCN.2023.3277251}}


\bibitem[Yin et~al\mbox{.}(2021)]%
        {Yin2021}
\bibfield{author}{\bibinfo{person}{X. Yin}, \bibinfo{person}{Y. Zhu}, {and} \bibinfo{person}{J. Hu}.} \bibinfo{year}{2021}\natexlab{}.
\newblock \showarticletitle{{A Comprehensive Survey of Privacy-preserving Federated Learning: A Taxonomy, Review, and Future Directions}}.
\newblock \bibinfo{journal}{\emph{ACM Comput. Surv.}} \bibinfo{volume}{54}, \bibinfo{number}{6}, Article \bibinfo{articleno}{131} (\bibinfo{date}{Jul} \bibinfo{year}{2021}), \bibinfo{numpages}{36}~pages.
\newblock


\bibitem[Zhang et~al\mbox{.}(2024)]%
        {Zhang2024b}
\bibfield{author}{\bibinfo{person}{Y. Zhang}, \bibinfo{person}{R. Behnia}, \bibinfo{person}{A.~A. Yavuz}, \bibinfo{person}{R. Ebrahimi}, {and} \bibinfo{person}{E. Bertino}.} \bibinfo{year}{2024}\natexlab{}.
\newblock \showarticletitle{{ Uncovering Attacks and Defenses in Secure Aggregation for Federated Deep Learning }}. In \bibinfo{booktitle}{\emph{2024 IEEE International Conference on Data Mining Workshops (ICDMW)}}. \bibinfo{publisher}{IEEE Computer Society}, \bibinfo{address}{Los Alamitos, CA, USA}, \bibinfo{pages}{650--656}.
\newblock
\href{https://doi.org/10.1109/ICDMW65004.2024.00090}{doi:\nolinkurl{10.1109/ICDMW65004.2024.00090}}


\bibitem[Zhang~\emph{et al.}(2020)]%
        {Zhang2020Usenix}
\bibfield{author}{\bibinfo{person}{C. Zhang~\emph{et al.}}} \bibinfo{year}{2020}\natexlab{}.
\newblock \showarticletitle{{BatchCrypt: efficient homomorphic encryption for cross-silo federated learning}}. In \bibinfo{booktitle}{\emph{Proceedings of the 2020 USENIX Conference on Usenix Annual Technical Conference}} \emph{(\bibinfo{series}{USENIX ATC'20})}. Article \bibinfo{articleno}{33}, \bibinfo{numpages}{14}~pages.
\newblock


\bibitem[Zhou et~al\mbox{.}(2023)]%
        {zhou2023ppa}
\bibfield{author}{\bibinfo{person}{Chunyi Zhou}, \bibinfo{person}{Yansong Gao}, \bibinfo{person}{Anmin Fu}, \bibinfo{person}{Kai Chen}, \bibinfo{person}{Zhiyang Dai}, \bibinfo{person}{Zhi Zhang}, \bibinfo{person}{Minhui Xue}, {and} \bibinfo{person}{Yuqing Zhang}.} \bibinfo{year}{2023}\natexlab{}.
\newblock \showarticletitle{PPA: Preference profiling attack against federated learning}. In \bibinfo{booktitle}{\emph{The Network and Distributed System Security Symposium (NDSS)}}.
\newblock


\bibitem[Zhou~\emph{et al.}(2022)]%
        {Zhou2022}
\bibfield{author}{\bibinfo{person}{H. Zhou~\emph{et al.}}} \bibinfo{year}{2022}\natexlab{}.
\newblock \showarticletitle{PFLF: Privacy-Preserving Federated Learning Framework for Edge Computing}.
\newblock \bibinfo{journal}{\emph{{IEEE Transactions on Information Forensics and Security}}}  \bibinfo{volume}{17} (\bibinfo{year}{2022}), \bibinfo{pages}{1905--1918}.
\newblock


\bibitem[Zhu et~al\mbox{.}(2019)]%
        {Zhu2019}
\bibfield{author}{\bibinfo{person}{L. Zhu}, \bibinfo{person}{Z. Liu}, {and} \bibinfo{person}{S. Han}.} \bibinfo{year}{2019}\natexlab{}.
\newblock \showarticletitle{{Deep Leakage from Gradients}}. In \bibinfo{booktitle}{\emph{Advances in Neural Information Processing Systems}}, Vol.~\bibinfo{volume}{32}. \bibinfo{publisher}{Curran Associates, Inc.}
\newblock
\urldef\tempurl%
\url{https://proceedings.neurips.cc/paper_files/paper/2019/file/60a6c4002cc7b29142def8871531281a-Paper.pdf}
\showURL{%
\tempurl}


\end{thebibliography}

\appendix

\appendix
\section{Detailed Analysis of Forward and Backward Secrecy in e-SeaFL and Flamingo}
\label{appendix:fs-bs-analysis}
Both forward secrecy (FS) and backward secrecy (BS) are critical privacy properties in secure aggregation protocols, ensuring that the compromise of long-term cryptographic material does not expose users’ model updates from prior training rounds and any future training rounds, respectively. This protection is particularly important in FL, where training may span multiple rounds over days, and client devices are often deployed in untrusted or adversarial environments. Without FS and BS, a one-time key compromise can retroactively invalidate the privacy of all prior and future contributions, respectively.
\par 
As discussed earlier, single-setup secure aggregation protocols offer a promising direction for scalable, efficient, and secure FL. Compared to multi-setup protocols, they significantly reduce communication and computation overhead by avoiding repeated cryptographic setup in each training round. Two notable protocols in this category are e-SeaFL \cite{Behnia2024} and Flamingo \cite{Ma2023}, both of which adopt similar masking techniques based on long-term shared secrets and PRFs. Among the two, e-SeaFL demonstrates improved performance and practicality due to its architectural design, which eliminates direct user-to-user interactions and relies instead on a few assisting nodes for mask coordination and recovery. This makes it more suitable for large-scale FL deployments.
\par 
Despite their efficiency, both e-SeaFL and Flamingo share a privacy concern: they do not provide FS and BS under their current single-setup design. In the following section, we analyze e-SeaFL in detail to demonstrate how its use of long-term cryptographic material exposes it to FS and BS vulnerabilities. Since Flamingo employs a similar masking strategy, the same analysis applies, indicating that Flamingo also lacks both FS and BS.

\subsection{Mask Derivation in e-SeaFL}
In e-SeaFL, each user $i$ and assisting node $a \in \mathcal{A}_i$ perform a key agreement protocol ($\mathsf{KA}$) to derive a long-term shared secret or seed in the setup phase:

\[
s_{i,a} = \mathsf{KA}(SK_i, PK_a),
\]

where $SK_i$ and $PK_a$ are the private key of user $i$ and the public key of assisting node $a$, respectively and $\mathcal{A}_i$ is the set of participating assisting nodes. This shared secret $s_{i,a}$ is then used in each training round $t$ as input to a $\mathtt{PRF}$ to generate a round-specific mask value:

\begin{align}
    m_{i,a}^{(t)} &= \mathtt{PRF}(s_{i,a}, t) \label{eq:mask-fun}
\end{align}

The final masked update sent by user $i$ in round $t$ is constructed as:

\[
\tilde{x}_i^{(t)} = x_i^{(t)} + \sum_{a \in \mathcal{A}_i} m_{i,a}^{(t)}.
\]
where $x_i^{(t)}$ is the local model update of the user $i$ for the round $t$. 
\par 
To recover the true global model update $\theta_t$, the server collaborates with assisting nodes to remove the aggregated masks. Each assisting node $a$ computes the sum of its masking contributions across all users, say $\mathcal{U}$, it assists: 

\[
\sum_{i \in \mathcal{U}} m_{i,a}^{(t)}= \sum_{i \in \mathcal{U}} \mathtt{PRF}(s_{i,a}, t) 
\]

After receiving all the aggregated masking contributions from the assisting nodes, the central server removes these masks from the aggregated global model parameters to recover the plaintext model update, as follows:
\[
\theta= \sum_{i\in \mathcal{U}}x_i^{(t)} = \sum_{i\in \mathcal{U}}\tilde{x}_i^{(t)} - \sum_{a\in \mathcal{A}_i}\left(\sum_{i \in \mathcal{U}} m_{i,a}^{(t)}\right)
\]

\subsection{Forward and Backward Secrecy Limitations}
In e-SeaFL, round-specific masking values \( m_{i,a}^{(t)} \) are deterministically derived from long-term shared secrets \( s_{i,a} \) between user \( i \) and assisting node \( a \), as defined in Eq.~\ref{eq:mask-fun}. If only a subset of these secrets is compromised, i.e. \( \mathcal{A}'_i \subset \mathcal{A}_i \), the adversary can compute only partial masking shares. Since at least one share remains unknown, the masked update retains information-theoretic secrecy. This threshold-like structure provides resilience against partial compromise and is a notable strength of e-SeaFL.
\par 
However, this protection collapses entirely if \emph{all} long-term secrets \( s_{i,a} \) for \( a \in \mathcal{A}_i \) are compromised. The adversary can retroactively and proactively compute all masking shares \( m_{i,a}^{(t)} \) across every round \( t \), enabling full reconstruction:


\begin{equation}
x_i^{(t)} = \tilde{x}_i^{(t)} - \sum_{a \in \mathcal{A}_i} m_{i,a}^{(t)}.
\end{equation}

In other words:
\begin{itemize}
    \item \textbf{Forward secrecy is broken:} Compromising the long-term secrets at time~\(T\) enables recovery of all past updates (\(t < T\)).  
    \item \textbf{Backward secrecy is broken:} The same compromise allows prediction of all future masks (\(t > T\)) and recovery of future updates.  
\end{itemize}

\subsection{Practical Implications}
The absence of FS and BS in e-SeaFL (similarly in Flamingo) introduces a critical privacy concern in real-world FL deployments. Devices such as smartphones, edge nodes, and IoT sensors frequently operate in hostile environments--- exposed to malware, physical tampering, and insider threats. Over prolonged training sessions, the probability of compromise increases significantly.
\par 
Once a long-term secret \( s_{i,a} \) is established between a user and an assisting node, it remains static throughout the entire training lifecycle. If compromised, even once, an adversary can deterministically regenerate all masking values \( m_{i,a}^{(t)} \) across every round. This enables full reconstruction of the user’s masked updates. The breach is not limited to a single round; it retroactively exposes past contributions and proactively jeopardizes future ones. While e-SeaFL achieve efficiency through single-setup designs, its reliance on static secrets makes it unsuitable for deployments demanding robust confidentiality. 
\par 
In contrast, our protocol eliminates this risk entirely. By using fresh, per-round randomness and avoiding long-term key dependencies, we ensure that each training round is cryptographically isolated. Even if a device is compromised, the damage is strictly contained to that round. There is no possibility of past and future exposures. This design offers not just theoretical security, but practical resilience without requiring complex key ratcheting, proactive key refreshes, or threshold-based updates.

\section{\(\alpha\)-summation ideal functionality}
\label{sec:appendix-alpha-summation}
As discussed in~\cite{Bell2020} and \cite{Behnia2024}, a summation protocol is referred to as \(\alpha\)-secure if it ensures that each honest participant's input is aggregated with at least \(\alpha \cdot |\mathcal{H}|\) other secret values, where \(\mathcal{H}\) denotes the set of honest parties. We adopt this notion of \(\alpha\)-security in our simulation-based proof framework, consistent with the above works.

\begin{definition}[\(\alpha\)-summation ideal functionality]
Let \( p, n, d \in \mathbb{Z} \), and let \( \alpha \in [0,1] \). Suppose we have a subset \( S \subseteq [|\mathcal{U}|] \), and define the collection \( \mathbf{X}_S = \{ \mathbf{x}_i \}_{i \in S} \), where each vector \( \mathbf{x}_i \in \mathbb{Z}_p \). Let \( \mathcal{P}_S \) represent the set of all partitions of \( S \), and consider a partition \( \{S_1, \ldots, S_k\} \in \mathcal{P}_S \) composed of pairwise disjoint subsets.

The \(\alpha\)-summation ideal functionality is modeled by a function \( \mathcal{F}_{\mathbf{X},\alpha}(\cdot) \), which takes as input the partition \( \{S_j\}_{j=1}^k \) and outputs a collection \( \{\mathbf{s}_j\}_{j=1}^k \), where each \( \mathbf{s}_j \) is computed as follows:

\[
\mathbf{s}_j =
\begin{cases}
\sum\limits_{i \in S_j} \mathbf{x}_i & \text{if } |\mathcal{P}_S| \geq \alpha \cdot |S| \\
\bot & \text{otherwise}
\end{cases}
\quad \text{for all } j \in [1, \ldots, k].
\]
\end{definition}

\section{Security Against Semi-Honest Users Only}
\begin{proof}\label{appendix:Semi-Honest-Users-Only}
In our protocol, users randomly choose their input values (i.e., random secret keys) to mask the local model parameters, similar to a one-time pad. Each user's input values are independent of those of other users. Since the aggregator and intermediate servers are considered honest entities, the combined view of the users in $C_c$ will not reveal any useful information about the input values of users not in $C_c$. Additionally, the honest-but-curious users in $C_c$ only receive the identities of the honest users not in $C_c$. This enables the simulator $\mathsf{Sim_{C}^{\mathbb{U}, t_c, \lambda}}$ to use dummy input values for the honest users not in $C_c$ while keeping the combined views of the users in $C_c$ identical to that of $\mathsf{Real_{C}^{\mathbb{U}, t_c, \lambda}}$.
\end{proof}

\section{Security Against Semi-Honest Intermediate Servers Only}
\begin{proof}\label{appendix:semi-honest-intermediate-servers-only}

In this proof, we consider that the intermediate servers in $C_f$ are dishonest. Since users randomly and independently choose their masking parameters (i.e., random secret keys), the intermediate servers in $C_f$ will not gain any useful information from the public values. Similarly, users send the masked local models (which are independent of each other and similar to the one-time pads) to each intermediate server separately using a secure and authenticated channel. Therefore, the joint views of the intermediate servers in $C_f$ are independent of those of the users and intermediate servers not in $C_f$. Consequently, the simulator $\mathsf{Sim_{C}^{\mathbb{U}, t_f, \lambda}}$ can use dummy input values for the intermediate servers not in $C_f$ while keeping the combined views of the intermediate servers in $C_f$ identical to those in $\mathsf{Real_{C}^{\mathbb{F}, t_f, \lambda}}$.

        
\end{proof}

\section{Security Proof Against Semi-Honest Aggregators and Intermediate Servers}\label{proof:semihonest}
\begin{proof}
    In this proof, we use a standard hybrid argument, which consists of a sequence of hybrid distributions, to construct the simulator $\mathsf{Sim}$ by the subsequent modifications to the random variable $\mathsf{Real}$. The goal is to prove that two subsequent hybrids are computationally indistinguishable to ensure that the distribution of simulator $\mathsf{Sim}$ as a whole is also identical to the real execution $\mathsf{Real}$.
    
    \paragraph*{$\mathsf{Hyb_0}$} In this hybrid, the distribution of the joint views of $C_c$ and $C_f$ is exactly the same as that of $\mathsf{Real}$.

    \paragraph*{$\mathsf{Hyb_1}$} In this hybrid, the simulator $\mathsf{Sim}$ chooses random numbers in $\mathbb{Z}_q$ using a pseudo-random function $\mathrm{PRF}$ as the trained model for the users in $(\mathbb{U}_1\setminus C_c)$. Based on the security of the pseudo-random function, the joint views of the users in $C_c$ is identical to that of $\mathsf{Real}$. Hence, this hybrid is indistinguishable from the previous one. 
    
    \paragraph*{$\mathsf{Hyb_2}$} In this hybrid, similar to the previous one, but for each user $i$ in $(\mathbb{U}_1\setminus C_c)$, the simulator $\mathsf{Sim}$ generates separate one-time-pads to mask $r_i$ for each intermediate server in $(\mathbb{F}\setminus C_f)$ including the aggregator. Based on the security of the one-time-pad, the joint views of the users in $C_c$ and intermediate servers in $C_f$ is identical to that of $\mathsf{Real}$. Hence, this hybrid is indistinguishable from the previous one.
   
    
    \paragraph*{$\mathsf{Hyb_3}$} In this hybrid, for each user $i$ in $(\mathbb{U}_1\setminus C_c)$, the simulator $\mathsf{Sim}$ chooses two random secrets $(k_j, k_{j-1})\in \mathbb{Z}_q$ to mask $r_i$ as $r_i+ k_j- k_{j-1}$ for each intermediate server $j$ in $(\mathbb{F}\setminus C_f)$ including the aggregator. Based on the semantic security (IND-CPA) of the symmetric homomorphic encryption scheme (described in Section \ref{sec:symmetricKeyHomoEncrypt}), the joint views of the users in $C_c$ and intermediate servers in $C_f$ is identical to that of $\mathsf{Real}$. Hence, this hybrid is indistinguishable from the previous one.
     

    \paragraph*{$\mathsf{Hyb_4}$} This hybrid is similar to the previous one, with the only difference being that the simulator \(\mathsf{Sim}\) replaces the random masked values \(r_i\) for each user in \((\mathbb{U}_1 \setminus C_c)\) with values \(x_i\), subject to the constraint:
\begin{align}\label{eq:mask}
    \sum_{i \in \mathbb{U}_1 \setminus C_c} r_i = \sum_{i \in \mathbb{U}_1 \setminus C_c} x_i
\end{align}
The distribution of the masking values \(r_i\) and \(x_i\), when generated using a symmetric homomorphic encryption scheme, is identical subject to the Eq. \ref{eq:mask}.
   \par 
  The last hybrid shows that we can define a simulator $\mathsf{Sim}$ which is computationally indistinguishable from that of the real execution of $\mathsf{Real}$ from the combined views of the honest-but-curious entities. Hence, it completes the proof. 
\end{proof}
\section{Security Against Dishonest Intermediate Servers Only}\label{appendix:dishonest-intermediate-servers-only}
\begin{proof}
    This proof is similar to that of Theorem \ref{theorem:2}, as the intermediate servers in $C_f$ gain no knowledge about $x_{\mathbb{F}\setminus C_f}$ apart from the list of participants. Consequently, the simulator $\mathsf{Sim}$ can assign real inputs to the dishonest intermediate servers while assigning dummy inputs to the remaining intermediate servers, accurately replicating the perspectives of the intermediate servers in $C_f$. Thus, the joint views of the intermediate servers in $C_f$ in the simulation are indistinguishable from those in $\mathsf{Real}$.
\end{proof}
\section{Security Proof Against Dishonest Users, Intermediate Servers and Aggregator}\label{proof:malicious}
\begin{proof}
    We use the standard hybrid argument to carry out the security proof of this theorem. Similar to the Theorem \ref{theorem:3}, we construct a simulator $\mathsf{Sim}$ by the subsequent modifications to the random variable $\mathsf{Real}$. Our goal is to prove that two subsequent hybrids are computationally indistinguishable to ensure that the distribution of simulator $\mathsf{Sim}$ as a whole is also identical to the real execution $\mathsf{Real}$.
    
    \paragraph*{$\mathsf{Hyb_0}$} In this hybrid, the distribution of the joint view of $M_c$ of $\mathsf{Sim}$ is the same as that of $\mathsf{Real}$.

   \paragraph*{$\mathsf{Hyb_1}$} In this hybrid, the simulator $\mathsf{Sim}$ replaces the masking values of the locally trained models with randomly generated numbers of appropriate length. Based on the security of the pseudo-random function, the joint views of the users in $M_c$ are identical to those of $\mathsf{Real}$. Hence, this hybrid is indistinguishable from the previous one.
    
    \paragraph*{$\mathsf{Hyb_2}$} In this hybrid scenario, the simulator $\mathsf{Sim}$ replaces the masked locally trained model $x_i$ of each honest user $u_i$ with a randomly generated number, while it replaces it with $0$ for the adversaries. Based on the security of the pseudo-random function, the joint views of the users in $M_c$ are identical to those of $\mathsf{Real}$. Hence, this hybrid is indistinguishable from the previous one.
  
    \paragraph*{$\mathsf{Hyb_4}$} In this hybrid, the simulator $\mathsf{Sim}$ uses the symmetric homomorphic encryption with key negation method, as described in Section \ref{sec:symmetricKeyHomoEncrypt}, to replace each mask of the locally trained model for the intermediate servers in the $\mathbb{F}'$. Based on the semantic security (IND-CPA) of the symmetric homomorphic encryption scheme, the joint views of the users in $M_c$ are identical to those of $\mathsf{Real}$. Hence, this hybrid is indistinguishable from the previous one. 

    
    \paragraph*{$\mathsf{Hyb_5}$} In this hybrid, the $\mathsf{SIM}$ aborts if $M_c$ provides with incorrect signatures $\sigma^{i, j}_t$ (or $\sigma^{i, \mathsf{Agg}}_t$) for each user $u_i$ in $(\mathbb{U} \setminus C_c)$. This hybrid is indistinguishable from the previous one based on the security assumption of the signature mechanism.

    \paragraph*{$\mathsf{Hyb_6}$} In this hybrid, the simulator $\mathsf{SIM}$ fetches $\mathrm{I}$ list of users from the aggregator and aborts if list has any invalid user. This hybrid is identical to the previous one. 
    
    \paragraph*{$\mathsf{Hyb_7}$} In this hybrid, the simulator $\mathsf{SIM}$ queries to the Random Oracle $\mathcal{O}$ for the values $w_i$ for the set $(\mathrm{I}\setminus C_c)$ with respect to $\sum_{u_i\in \mathcal{Q}\setminus C_c} w_i= \sum_{u_i\in \mathcal{Q}\setminus C_c}x_i$ instead of receiving the inputs of the honest users and masks it with a random number. For the adversaries, the $\mathsf{Sim}$ chooses random numbers (masks are set to be $0$ here). The Random Oracle $\mathcal{O}$ will not abort and does not modify the joint view of $M_c$ if at least more than $2$ participants are honest due to the properties of the random oracle and pseudo-random function. Thus, this hybrid is indistinguishable from the previous one.

    \paragraph*{$\mathsf{Hyb_8}$} In this hybrid, the $\mathsf{Sim}$ aborts if $M_c$ provides with incorrect signatures $\sigma^{\mathrm{F}}$ for each intermediate server $f_i$ in $(\mathbb{F} \setminus C_f)$. This hybrid is indistinguishable from the previous one based on the security assumption of the signature mechanism.

    \paragraph*{$\mathsf{Hyb_9}$} In this hybrid, the simulator $\mathsf{SIM}$ again queries to the Random Oracle $\mathcal{O}$ for the values $\{w^j_i\}_{\forall j\in \mathbb{F}\setminus C_f}$ for the set $(\mathrm{I}\setminus C_c)$ with respect to $\sum_{u_i\in \mathcal{Q}\setminus C_c} w'_i= \sum_{u_i\in \mathcal{Q}\setminus C_c}x'_i$ instead of receiving the inputs of the honest users and honest intermediate servers. The Random Oracle $\mathcal{O}$ will not abort and does not modify the joint views of $M_c$ if at least more than $2$ users are honest and $1$ intermediate server is honest. Thus, this hybrid is indistinguishable from the previous one.
    

    \paragraph*{$\mathsf{Hyb_{10}}$} In this hybrid, the simulator \(\mathsf{Sim}\) selects a random secret from \(\mathbb{Z}_q\) to mask the aggregated model, as in the previous hybrid, and computes a message authentication code (MAC) over the aggregated model using a secret key unknown to the adversary. Since the random masking value is never reused, it effectively functions as a one-time pad. Furthermore, the MAC, assumed to be pseudorandom under the PRF assumption, reveals no information to the adversary. Consequently, this hybrid is computationally indistinguishable from the previous one.

\par 
It can be observed from the last hybrid that the adversary's view remains unaffected which proves the indistinguishable property between the different hybrids. Furthermore, this hybrid does not take input from honest parties. This completes the proof.
\end{proof}

\end{document}